\documentclass[a4paper,11pt]{article}
\usepackage[top=2.7cm,bottom=3cm,left=2.5cm,right=2.5cm,footskip=.7in]{geometry}

\usepackage{amsmath}
\usepackage{amssymb}
\usepackage{amsfonts,amsthm,complexity}
\usepackage[hidelinks]{hyperref}
\newtheorem{theorem}{Theorem}
\newtheorem{lemma}[theorem]{Lemma}

\newtheorem{obs}[theorem]{Observation}

\newtheorem{cor}[theorem]{Corollary}
\theoremstyle{definition}
\newtheorem{definition}{Definition}
\usepackage{thm-restate}
\usepackage{todonotes}

\usepackage{graphicx}
\usepackage{colortbl}
\usepackage{multicol}
\usepackage[normalem]{ulem}
\usepackage{authblk}

\usepackage{tikz}
\usetikzlibrary{snakes}
\usetikzlibrary{decorations.pathmorphing,matrix}

\usepackage[ruled,vlined,linesnumbered]{algorithm2e}

\newcommand{\mypath}[1]{\langle#1\rangle}

\bibliographystyle{plain}

\title{Computing and Listing Avoidable Vertices and Paths}
\author[1]{Charis Papadopoulos\thanks{Research supported by the Hellenic Foundation for Research and Innovation (H.F.R.I.) under the ``First Call for H.F.R.I. Research Projects to support Faculty members and Researchers and the procurement of high-cost research equipment grant'', Project FANTA (eFficient Algorithms for NeTwork Analysis), number HFRI-FM17-431.}}
\author[2]{Athanasios E. Zisis}

\affil[1]{Department of Mathematics, University of Ioannina, Greece\\
 \texttt{charis@uoi.gr}}
\affil[2]{Department of Mathematics, University of Ioannina, Greece\\
 \texttt{athanas.zisis@gmail.com}}

\date{}

\begin{document}
\maketitle

\begin{abstract}
A simplicial vertex of a graph is a vertex whose neighborhood is a clique.
It is known that listing all simplicial vertices can be done in $O(nm)$ time or $O(n^{\omega})$ time, where $O(n^{\omega})$ is the time needed to perform a fast
matrix multiplication.
The notion of avoidable vertices generalizes the concept of simplicial vertices in the following way:
a vertex $u$ is avoidable if every induced path on three vertices with middle vertex $u$ is contained in an induced cycle.
We present algorithms for listing all avoidable vertices of a graph through the notion of minimal triangulations and common neighborhood detection.
In particular we give algorithms with running times $O(n^{2}m)$ and $O(n^{1+\omega})$, respectively.
Additionally, based on a simplified graph traversal
we propose a fast algorithm that runs in time $O(n^2 + m^2)$ and matches the corresponding running time of listing all simplicial vertices
on sparse graphs with $m=O(n)$.
Moreover, we show that our algorithms cannot be improved significantly, as we prove that under plausible complexity assumptions there is no truly subquadratic algorithm for recognizing an avoidable vertex.
To complement our results, we consider their natural generalizations of avoidable edges and avoidable paths.
We propose an $O(nm)$-time algorithm that recognizes whether a given induced path is avoidable.
\end{abstract}


\section{Introduction}
Closely related to chordal graphs is the notion of a simplicial vertex, that is a vertex whose neighborhood induces a clique.
In particular, Dirac~\cite{Dirac61} proved that every chordal graph admits a simplicial vertex.
However not all graphs contain a simplicial vertex.
Due to their importance to several algorithmic problems, such as finding a maximum clique or computing the chromatic number,
it is natural to seek for fast algorithms that list all simplicial vertices of a graph.
For doing so, the naive approach takes $O(nm)$ time, whereas the fastest algorithms take advantage of computing the square of an $n \times n$ binary matrix and run in $O(n^{\omega})$ and $O(m^{2\omega/(\omega+1)})$ time \cite{KloksKM00}. Hereafter we assume that we are given a graph $G$ on $n$ vertices and $m$ edges; currently, $\omega < 2.37286$ \cite{AlmanW21}.

A natural way to generalize the concept of simplicial vertices is the notion of an avoidable vertex.
A vertex $u$ is avoidable if either there is no induced path on three vertices with middle vertex $u$, or
every induced path on three vertices with middle vertex $u$ is contained in an induced cycle.
Thus every simplicial vertex is avoidable, however the converse is not necessarily true.
As opposed to simplicial vertices, it is known that every graph contains an avoidable vertex \cite{AboulkerCTV15,BerryB98,BerryBBS10,OHTSUKI1976622}.
Extending the notion of avoidable vertices is achieved through avoidable edges and, more general, avoidable paths.
This is accomplished by replacing the middle vertex in an induced path on three vertices by an induced path on arbitrary $k \geq 2$ vertices, denoted by $P_k$.
Beisegel et al. \cite{BeisegelCGMS19} proved first that every non-edgeless graph contains an avoidable edge, considering the case of $k=2$.
Regarding the existence of an avoidable induced path of arbitrary length, Bonamy et al. \cite{BonamyDHT20} settled a conjecture in \cite{BeisegelCGMS19}
and showed that every graph is either $P_k$-free or contains an avoidable $P_k$.
Gurvich et al. \cite{Gurvich22} strengthened the later result by showing that every induced path can be shifted in an avoidable path,
in the sense that there is a sequence of neighboring induced paths of the same length.
Although the provided proof in \cite{Gurvich22} is constructive and identifies an avoidable path given an induced path,
the proposed algorithm was not settled whether it runs in polynomial time.

Since avoidable vertices generalize simplicial vertices, it is expected that avoidable vertices find applications in further algorithmic problems.
Indeed, Beisegel et al. \cite{BeisegelCGMS19} revealed new polynomially solvable cases of the maximum weight clique problem that take advantage of the notion of avoidable vertices.
Similar to simplicial vertices, the complexity of a problem can be reduced by removing avoidable vertices, tackling the problem on the reduced graph.
It is therefore of interest to list all avoidable vertices efficiently.
If we are only interested in computing two avoidable vertices this can be done in linear time by using fast graph searches \cite{BerryBBS10,BeisegelCGMS19}. 
However, an efficient elimination process, such as deleting or removing avoidable vertices, is not enough to recursively compute the rest of the avoidable vertices.  
Thus, computing the set of all avoidable vertices requires to decide for each vertex of the graph whether it is avoidable and
a usual graph search cannot guarantee to test all vertices.

Concerning lower bounds, it is known \cite{KratschS06} that the problem of finding a triangle in an $n$-vertex graph can be reduced in $O(n^2)$
time to the problem of counting the number of simplicial vertices in an $O(n)$-vertex graph.
Moreover, Ducoffe proved that under plausible complexity assumptions computing the diameter of an AT-free graph is at least as hard as computing a simplicial vertex \cite{Ducoffe22}. %
For general graphs, the quadratic time complexity of diameter computation cannot be improved by much \cite{RodittyW13}.
We note that the currently fastest algorithms for detecting a triangle run in $O(nm)$ time and $O(n^{\omega})$ time \cite{ItaiR78}.
Notably, we show a similar lower bound for recognizing an avoidable vertex.
In particular, via a reduction form the Orthogonal-Vector problem, we prove that under the Strong Exponential-Time Hypothesis, there is no truly subquadratic algorithm for deciding whether a given vertex is avoidable.
This gives a strong evidence that our $O(nm)$- and $O(n^{\omega})$-recognition algorithms upon which are based our listing algorithms cannot be improved significantly.

A naive approach that recognizes a single vertex $u$ of a graph $G$ of whether it is avoidable or not, needs to check if all neighbors of $u$ are pairwise connected in an induced subgraph of $G$.
Thus the running time of recognizing an avoidable vertex is $O(n^3+n^2m)$ or, as explicitly stated in \cite{BeisegelCGMS19}, it can be expressed as $O(\overline{m} \cdot (n+m))$ where $\overline{m}$ is the number of edges in the complement of $G$.
Inspired by both running times, we first show that we can reduce in linear time the listing problem on a graph $G$ having $m \geq n$ and $\overline{m} \geq n$.
In a sense such a result states that graphs that are sparse ($m < n$) or dense ($\overline{m} < n$) can be decomposed efficiently to smaller connected graphs for which their complement is also connected.
Towards this direction, we give an interesting connection with the avoidable vertices on the complement of $G$.
As a result, the naive algorithms for listing all avoidable vertices take $O(n^3 \cdot m)$ and $O(n \cdot \overline{m} \cdot m)$ time, respectively.
Moreover, based on the proposed reduction we derive an optimal, linear-time, algorithm for listing all avoidable vertices on graphs having no induced path on four vertices, known as cographs.

Our main results consist of new algorithms for listing all avoidable vertices in running times comparable to the ones for listing simplicial vertices.
More precisely, we propose three main approaches that result in algorithms for listing all avoidable vertices of a graph $G$ with the following running times:
\begin{itemize}
\item $O(n^2 \cdot m)$, by using a minimal triangulation of $G$. \
A close relationship between avoidable vertices and minimal triangulation was already known \cite{BeisegelCGMS19}.
However, listing all avoidable vertices through the proposed characterization is inefficient, since one has to produce \emph{all} possible minimal triangulations of $G$.
Here we strengthen such a characterization in the sense that it provides an efficient recognition based on one particular minimal triangulation of $G$.
More precisely, we take advantage of vertex-incremental minimal triangulations that can be computed in $O(nm)$ time \cite{BerryHV06}.

\item $O(n^2 + m^2)$, by exploring structural properties on each edge of $G$. \
This approach is based on a modified, traditional breadth-first search algorithm.
Our task is to construct search trees rooted at a particular vertex that reach all vertices of a predescribed set $S$, so that every non-leaf vertex does not belong to $S$.
If such a tree exists then every path from the root to a leaf that belongs to $S$ is called an $S$-excluded path.
It turns out that $S$-excluded paths can be tested in linear time and we need to make $2m$ calls of a modified breadth-first search algorithm.

\item $O(n^{1+\omega})$, where $O(n^{\omega})$ is the running time for matrix multiplication. \
For applying a matrix multiplication approach, we contract the connected components of $G$ that are outside the closed neighborhood of a vertex.
Then we observe that a vertex $u$ is avoidable if the neighbors of $u$ are pairwise in distance at most two in the contracted graph.
As the distance testing can be encapsulated by the square of its adjacency matrix, we deduce an algorithm that takes advantage of a fast matrix multiplication.
\end{itemize}
We should note that each of the stated algorithms is able to recognize if a given vertex $u$ of $G$ is avoidable in time $O(nm)$, $O(d(u)(n+m))$, and $O(n^{\omega})$, respectively, where $d(u)$ is the degree of $u$ in $G$.
Further, all of our proposed algorithms are characterized by their simplicitiy
and, besides the fast matrix multiplication, consist of basic ingredients that avoid using sophisticated data structures. 

In addition, we consider the natural generalizations of avoidable vertices, captured within the notions of the avoidable edges and avoidable paths.
A naive algorithm that recognizes an avoidable edge takes time $O(n^2 \cdot m)$ or $O(\overline{m} \cdot m)$.
Here we show that recognizing an avoidable edge of a graph $G$ can be done in $O(n \cdot m)$ time.
This is achieved by taking advantage of the notions of the $S$-excluded paths and their efficient detection by the modified breadth-first search algorithm.
Also notice that an avoidable edge is an avoidable path on two vertices.
We are able to reduce the problem of recognizing an avoidable path of arbitrary length to the recognition of an avoidable edge.
In particular, given an induced path we prove that we can replace the induced path by an edge and test whether the new added edge is avoidable or not in a reduced graph.
Therefore our recognition algorithm for testing whether a given induced path is avoidable takes $O(n \cdot m)$ time.
As a side remark of the later algorithm, we partially resolve an open question raised in \cite{Gurvich22}.
In particular, \cite{Gurvich22} asks whether their algorithm for identifying an avoidable path given an induced path, runs in polynomial time.
Our result implies that if the given path is avoidable then their algorithm runs in polynomial time.

\section{Preliminaries}
All graphs considered here are finite undirected graphs without loops and multiple edges.
We refer to the textbook by Bondy and Murty~\cite{Bondy} for any undefined graph terminology.
For a graph $G=(V_G, E_G)$, we use $V_G$ and $E_G$ to denote the set of vertices and edges, respectively.
We use $n$ to denote the number of vertices of a graph and use $m$ for the number of edges.
Given $x\in V_G$, we denote by $N_G(x)$ the neighborhood of $x$.
The \emph{degree} of $x$ is the number of edges incident to $x$, denoted by $d_G(x)$. That is, $d_G(x)=|N_G(x)|$.
The closed neighborhood of $x$, denoted by $N_G[x]$, is defined as $N_{G}(x)\cup\{x\}$.
For a set $X\subset V(G)$, $N_G(X)$ denotes the set of vertices in $V(G)\setminus X$ that have at least one neighbor in $X$. Analogously, $N_G[X]=N_G(X)\cup X$.
Given $X\subseteq V_G$, we denote by $G-X$ the graph obtained from $G$ by the removal of the vertices of $X$.
If $X=\{u\}$, we also write $G-u$. The \emph{subgraph induced by $X$} is denoted by $G[X]$, and has $X$ as its vertex set and $\{uv~|~u,v\in X\mbox{ and }uv\in E_G\}$ as its edge set.
For $R\subseteq E(G)$, $G\setminus R$ denotes the
graph~$(V(G), E(G)\setminus R)$, that is a subgraph of $G$. If $R=\{e\}$, we also write $G \setminus e$.

A {\it clique} of $G$ is a set of pairwise adjacent vertices
of $G$, and a {\it maximal clique} of $G$ is a clique of $G$
that is not properly contained in any clique of $G$. An
{\it independent set} of $G$ is a set of pairwise non-adjacent
vertices of $G$.
The induced path on $k \geq 2$ vertices is denoted by $P_k$ and the
induced cycle on $k\geq 3$ vertices is denoted by $C_k$.
For an induced path $P_k$, the vertices of degree one are called \emph{endpoints}.
A vertex $v$ is {\it universal} in $G$ if $N[v] = V(G)$ and $v$ is {\it isolated} if $N(v) = \emptyset$.
A vertex of degree one is called \emph{leaf}.
A graph is {\it connected} if there is a path between any
pair of vertices.
A {\it connected component} of $G$ is a maximal connected subgraph of $G$.
For any two vertices $x$ and $y$ of a connected graph there is an induced path having $x$ and $y$ as endpoints.
Given two vertices $u$ and $v$ of a connected graph $G$, a set $S \subset V_G$ is called \emph{$(u,v)$-separator} if $u$ and $v$ belong to different connected components of $G-S$.
We say that $S$ is a separator if there exist two vertices $u$ and $v$ such that $S$ is a $(u,v)$-separator.
For a set of finite graphs $\mathcal{H}$, we say that a graph $G$ is $\mathcal{H}$-free if $G$ does not contain an induced subgraph isomorphic to any of the graphs of $\mathcal{H}$.

The \emph{disjoint union} of two graphs $G$ and $H$, denoted by $G \cup H$, is the graph on vertex set $V(G) \cup V(H)$ and edge set $E(G) \cup E(H)$.
The \emph{complement} of $G$, denoted by $\overline{G}$, is the graph on vertex set $V(G)$ and edge set $\{uv \mid uv \notin E(G)\}$.
We say that a graph $G$ is \emph{co-connected} if $\overline{G}$ is connected. Moreover a \emph{co-component} of $G$ is a connected component of $\overline{G}$.

Given an edge $e=xy$, the {\it contraction} of $e$ removes both $x$ and $y$ and replaces them by a new vertex $w$, which is made adjacent to those vertices that were adjacent to at least one of the vertices $x$ and $y$, that is $N(w)=(N(x) \cup N(y))\setminus\{x,y\}$.
Let $S$ be a vertex set of $G$ such that $G[S]$ is connected.
If we repeatedly contract an edge of $G[S]$ until one vertex remains in $S$ then we say that we \emph{contract $S$ into a single vertex}.
In different terminology, {\it contracting a set of vertices} $S$ is the operation of substituting the vertices of $S$ by a new vertex $w$ with $N(w)=N(S)$.


A vertex $v$ is called \emph{simplicial} if the vertices of $N_G(v)$ induce a clique.
Listing all simplicial vertices of a graph can be done $O(nm)$ time.
The fastest algorithm for listing all simplicial vertices takes time $O(n^{\omega})$, where $O(n^{\omega})$ is the time needed to multiply two $n \times n$ binary matrices \cite{KloksKM00}
(currently, $\omega < 2.37286$ \cite{AlmanW21}).
Avoidable vertices and edges generalize the concept of simplicial vertices in a natural way.

\begin{definition}\label{def:avoidablevertex}
A vertex $v$ is called \emph{avoidable} if every $P_3$ with middle vertex $v$ is contained in an induced cycle.
Equivalently, $v$ is avoidable if $d_G(v) \leq 1$ or for every pair $x,y \in N_G(v)$
the vertices $x$ and $y$ belong to the same connected component of $G - (N_G[u] \setminus \{x,y\})$.
\end{definition}

Every simplicial vertex is avoidable, however the converse is not necessarily true.
It is known that every graph contains an avoidable vertex \cite{AboulkerCTV15,BerryB98,OHTSUKI1976622}.
Every vertex of a graph of degree $\le 1$ is simplicial and hence avoidable. Thus a non-avoidable vertex of a graph, has degree $\ge 2$.

\begin{obs}\label{obs:separtor}
Let $G$ be a graph and let $u$ be a vertex of $G$. Then $u$ is non-avoidable if and only if there is an $(x,y)$-separator $S$ that contains $u$ such that $S \subset N_{G}[u]$ for some vertices $x,y \in N_{G}(u)$.
\end{obs}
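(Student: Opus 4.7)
The plan is to prove this observation by directly unpacking Definition~\ref{def:avoidablevertex} and translating the ``same connected component'' condition into separator language. Since avoidability of $u$ is stated as a disjunction ($d_G(u)\le 1$ or a connectivity condition on all pairs in $N_G(u)$), non-avoidability automatically forces $d_G(u)\ge 2$ together with the existence of a specific pair $x,y\in N_G(u)$ that fail the connectivity condition, i.e.\ that lie in distinct components of $G-(N_G[u]\setminus\{x,y\})$. The observation then just matches this ``distinct components'' data up with the definition of a separator.

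For the forward direction, I would take the canonical separator $S:=N_G[u]\setminus\{x,y\}$ handed to us by the negated definition. Clearly $S\subseteq N_G[u]$; moreover $u\in S$ because $x,y\in N_G(u)$ implies $u\notin\{x,y\}$, so $u$ survives the removal. By the failure of the connectivity condition, $x$ and $y$ lie in different components of $G-S$, so $S$ is an $(x,y)$-separator as required.

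For the converse, I would start with an $(x,y)$-separator $S$ satisfying $u\in S\subseteq N_G[u]$ and $x,y\in N_G(u)$. The only routine check is that by the very definition of an $(x,y)$-separator, $x,y\notin S$, hence $S\subseteq N_G[u]\setminus\{x,y\}$. Consequently $G-(N_G[u]\setminus\{x,y\})$ is obtained from $G-S$ by deleting further vertices (none of which are $x$ or $y$), which can only disconnect the graph further; so $x$ and $y$ continue to lie in distinct components of $G-(N_G[u]\setminus\{x,y\})$. Since a separator requires $x\neq y$, we also have $d_G(u)\ge 2$, and Definition~\ref{def:avoidablevertex} then yields that $u$ is non-avoidable.

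The main ``obstacle'' is essentially bookkeeping around the boundary: verifying that $x,y\notin S$ (so that $S\subseteq N_G[u]\setminus\{x,y\}$ is well defined as a separator and the containment is strict at $x,y$), and that $u\notin\{x,y\}$ (so that $u$ naturally lies in $N_G[u]\setminus\{x,y\}$). Both are immediate from the hypothesis that $x,y\in N_G(u)$, so the proof reduces to a clean repackaging of Definition~\ref{def:avoidablevertex} with no real combinatorial content beyond the monotonicity remark that removing more non-endpoint vertices cannot reconnect $x$ and $y$.
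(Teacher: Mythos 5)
Your proposal is correct and follows essentially the same route as the paper: in the forward direction you take $S=N_G[u]\setminus\{x,y\}$ as the separator witnessed by the failure of Definition~\ref{def:avoidablevertex}, and in the converse you observe that $S\subseteq N_G[u]\setminus\{x,y\}$ and that removing the additional vertices of $N_G[u]\setminus\{x,y\}$ cannot reconnect $x$ and $y$. Your write-up merely spells out the bookkeeping ($x,y\notin S$, $u\notin\{x,y\}$, $d_G(u)\ge 2$) that the paper leaves implicit.
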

\begin{proof}
Assume that $u$ is non-avoidable. Then by Definition~\ref{def:avoidablevertex}, there are two vertices $x,y \in N_{G}(u)$ that belong to different connected components in $G - (N_G[u] \setminus \{x,y\})$.
This means that $S = N_G[u] \setminus \{x,y\}$ is an $(x,y)$-separator.
On the other hand, if there is such a separator $S$ for some vertices $x,y \in N_{G}(u)$ then $x$ and $y$ do not belong to the same connected component in the graph $G - S$ and, consequently, also in the graph $G - (N_G[u] \setminus \{x,y\})$, because $S \subset N_{G}[u]$. Thus $u$ is non-avoidable vertex.
\end{proof}


\subsection{A Lower Bound for Recognizing an Avoidable Vertex}
In the forthcoming sections, we give algorithms for recognizing an avoidable vertex in $O(nm)$ time and $O(n^{\omega})$ time.
Here we show that, under plausible complexity assumptions, a significant improvement on the stated running times is unlikely,
as we show that there is no truly subquadratic algorithm for deciding whether a given vertex is avoidable.
By \emph{truly subquadratic}, we mean an algorithm with running time $O(n^{2-\epsilon})$, for some $\epsilon >0$ where $n$ is the size of its input.

More precisely, the Strong Exponential-Time Hypothesis (SETH) states that
for any $\epsilon >0$, there exists a $k$ such that the $k$-SAT problem on $n$ variables cannot be solved in $O((2 - \epsilon)^n)$ time \cite{ImpagliazzoP01}.
The Orthogonal-Vector problem (OV) takes as input two families $A$ and $B$ of $n$ sets over a universe $C$, 
and asks whether there exist $a \in A$ and $b \in B$ such that $a\cap b = \emptyset$. An instance of OV is denoted by $OV(A,B,C)$.
It is known that under SETH, for any $\epsilon >0$, there exists a constant $c>0$ such that $OV(A,B,C)$ cannot be solved in $O(n^{2 - \epsilon})$, even if $|C| \leq c \cdot \log{n}$ \cite{Williams05}.
For deciding whether a given vertex is avoidable, we give a reduction from OV.

\begin{theorem}\label{theo:lowerbound}
The OV problem with $|A|=|B|=n$ can be reduced in $O(n \log{n})$ time to the problem of deciding whether a particular vertex of an $O(n)$-vertex graph is avoidable.
\end{theorem}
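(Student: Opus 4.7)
My plan is to reduce $OV(A,B,C)$ with $|C|=O(\log n)$ to the avoidable-vertex problem by building a graph $G$ with one distinguished vertex $u$, so that an orthogonal pair $(a,b)$ in the instance corresponds, via Observation~\ref{obs:separtor}, to an $(x,y)$-separator $S\subseteq N_G[u]$ witnessing that $u$ is non-avoidable. Concretely, I would identify the two neighbor-classes of $u$ with $A$ and $B$, and design a shallow gadget outside $N_G[u]$ that (i) always reconnects two neighbors of $u$ coming from the same side of the OV instance and (ii) reconnects two neighbors from opposite sides if and only if the corresponding sets intersect.

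The construction I propose is as follows. Take a central vertex $u$ with $N_G(u)=\{\alpha_a:a\in A\}\cup\{\beta_b:b\in B\}$. Outside $N_G[u]$ add one ``element'' vertex $\gamma_e$ for each $e\in C$, together with two ``stabilizers'' $v_A,v_B$. Put in the edges $\alpha_a\gamma_e$ iff $e\in a$, $\beta_b\gamma_e$ iff $e\in b$, $v_A\alpha_a$ for every $a\in A$, and $v_B\beta_b$ for every $b\in B$, and no further edges (in particular, no edges among the $\gamma_e$, no edge $v_Av_B$, and no edges between the stabilizers and the $\gamma$-vertices). The resulting graph has $1+2n+|C|+2=O(n)$ vertices and $O(n\log n)$ edges, so it can be built in $O(n\log n)$ time.

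For correctness, let $\{x,y\}\subseteq N_G(u)$ and set $S=N_G[u]\setminus\{x,y\}$. If $\{x,y\}=\{\alpha_a,\alpha_{a'}\}$ then in $G-S$ the vertex $v_A$ is still adjacent to both endpoints, so $\alpha_a$ and $\alpha_{a'}$ share a component; the case $\{\beta_b,\beta_{b'}\}$ is symmetric via $v_B$. If $\{x,y\}=\{\alpha_a,\beta_b\}$ then in $G-S$ the stabilizers $v_A$ and $v_B$ become pendants attached only to $\alpha_a$ and $\beta_b$ respectively, and the $\gamma_e$'s form an independent set; hence every $\alpha_a$--$\beta_b$ path must go through some $\gamma_e$ adjacent to both, i.e., some $e\in a\cap b$. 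Therefore $\alpha_a$ and $\beta_b$ are separated iff $a\cap b=\emptyset$, and by Observation~\ref{obs:separtor} $u$ is non-avoidable iff $OV(A,B,C)$ is a YES-instance.

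The main obstacle, in my view, is calibrating the connectivity gadget: adding any edge among the $\gamma_e$'s, or the edge $v_Av_B$, would instantly connect every $\alpha_a$ to every $\beta_b$ in $G-S$ and collapse the reduction, whereas omitting the stabilizers $v_A,v_B$ would make intra-side pairs fail the separator test whenever their corresponding sets happen to be disjoint, creating spurious non-avoidability. The construction above is essentially the minimal one avoiding both pitfalls while keeping the vertex count $O(n)$, which is exactly what is needed so that a truly subquadratic algorithm for avoidable-vertex recognition would translate into a truly subquadratic algorithm for $OV$ and hence contradict SETH.
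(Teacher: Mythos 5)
Your construction is the paper's construction verbatim up to renaming ($v_A,v_B$ are the paper's $c_A,c_B$ and the $\gamma_e$ are the element vertices of $C$), and your correctness argument follows the same lines: same-side pairs stay connected through the stabilizer, while a cross pair $\alpha_a,\beta_b$ is separated in $G-(N_G[u]\setminus\{x,y\})$ exactly when $a\cap b=\emptyset$, so $u$ is non-avoidable iff the OV instance is a yes-instance. The proposal is correct and essentially identical to the paper's proof.
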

\begin{proof}
Let $OV(A,B,C)$ be an instance of OV.
We construct a graph $G$ as follows.
The vertex set of $G$ consists of $A \cup B \cup C$ and three additional vertices $u,c_A,c_B$. For the edges of $G$, we have:
\begin{itemize}
\item $u$ is adjacent to every vertex of $A \cup B$;
\item $c_A$ is adjacent to every vertex of $A$ and $c_B$ is adjacent to every vertex of $B$;
\item for every $a \in A$ and every $c \in C$, $ac \in E(G)$ if and only if $c \in a$;
\item for every $b \in B$ and every $c \in C$, $bc \in E(G)$ if and only if $c \in b$.
\end{itemize}
These are exactly the edges of $G$. In particular notice that $G[C\cup \{u,c_A,c_B\}]$ is an independent set.
Moreover, observe that $G$ has $2n+|C|+3$ vertices and the number of edges is $O(n \log{n})$. 
We claim that $OV(A,B,C)$ is a yes-instance if and only if $u$ is non-avoidable in $G$.

Assume that there are sets $a \in A$ and $b \in B$ such that $a \cap b = \emptyset$.
Let $x \in A$ and $y \in B$ be the vertices of $A$ and $B$ that correspond to $a$ and $b$, respectively.
By construction, $x$ and $y$ are non-adjacent in $G$.
Moreover, by construction, $x$ and $y$ have no common neighbor in $C$, as $a \cap b = \emptyset$.
Now notice that all neighbors of $x$ and $y$ that do not belong to $N_G[u]=A \cup B \cup \{u\}$ are in $C\cup\{c_A,c_B\}$ and $G[C\cup\{c_A,c_B\}]$ is an edgeless graph.
Thus $x$ and $y$ belong to different components in $G - (N_G[u] \setminus \{x,y\})$ and $u$ is non-avoidable in $G$.

For the converse, assume that $u$ is non-avoidable in $G$.
Since $N(u) = A \cup B$ there are vertices $x,y \in A \cup B$ such that $x$ and $y$ lie in different components in $G - (N_G[u] \setminus \{x,y\})$.
If both $x$ and $y$ belong to the same set $A$, then they have a common neighbor $c_A$ in $G - (N_G[u] \setminus \{x,y\})$ which is not possible.
Similarly, both $x$ and $y$ do not belong to $B$ due to vertex $c_B$.
Thus $x \in A$ and $y \in B$.
As there are  no edges in $G[C\cup\{c_A,c_B\}]$ we deduce that $x$ and $y$ have no common neighbor in $C$.
Hence there are sets in $A$ and $B$ that correspond to the vertices $x$ and $y$, respectively, that have no common element.
Therefore $OV(A,B,C)$ is a yes-instance.
\end{proof}

\section{Detecting Avoidable Vertices in Sparse or Dense Graphs}
Here we show how to compute efficiently all avoidable vertices on sparse or dense graphs.
In particular, for a graph $G$ on $n$ vertices and $m$ edges, we consider the cases in which $m < n$ (sparse graphs) or $\overline{m} < n$ (dense graphs), where $\overline{m} = |E(\overline{G})|$.
Our main motivation comes from the naive algorithm that lists all avoidable vertices in $O(n \cdot \overline{m} \cdot (n+m))$ time that takes advantage of the non-edges of $G$ \cite{BeisegelCGMS19}.
We will show that we can handle the non-edges in linear time, so that the running time of the naive algorithm can be written as $O(n^3 \cdot m)$.
For doing so, we consider the behavior of avoidable vertices on the complement of a graph by considering the connected components in both $G$ and $\overline{G}$.
Before reaching the details of our approach, we give a simple linear-time algorithm on the class of cographs,
since they can be totally decomposed by the corresponding operations.

\subsection{Appetizer: an optimal algorithm on cographs}
A graph $G$ is \emph{cograph} if every induced subgraph
of $G$ on at least two vertices is either disconnected or its complement is disconnected.
Cographs are exactly the class of $P_4$-free graphs \cite{CorneilLB81}.
Every cograph $G$ admits a unique tree representation known as \emph{cotree} which is a rooted tree $T$ with two types of internal nodes: 0-nodes and 1-nodes.
The vertices of $G$ are assigned to the leaves of $T$ in a one-to-one manner. Thus $T$ contains $O(n)$ nodes.
The properties of a cotree $T$ are summarized as follows:
\begin{itemize}
\item[(i)] Two vertices of $G$ are adjacent if and only if their least common ancestor in $T$ is a 1-node.
\item[(ii)] Every internal node of $T$ has at least two children.
\item[(iii)] No two internal nodes of the same type are adjacent in $T$.
\end{itemize}
The cotree of a cograph is unique and can be generated in linear time \cite{CorneilPS85}.

We give the following characterization of avoidable vertices in $G$ in terms of the cotree $T$.
For doing so, we denote by $p(u)$ the parent of a vertex $u$ in $T$.
A 1-node $w$ of $T$ is called \emph{full 1-node} if the children of $w$ are all leaves in $T$.
\begin{figure}[t]
\centering
\includegraphics[scale= 1.0]{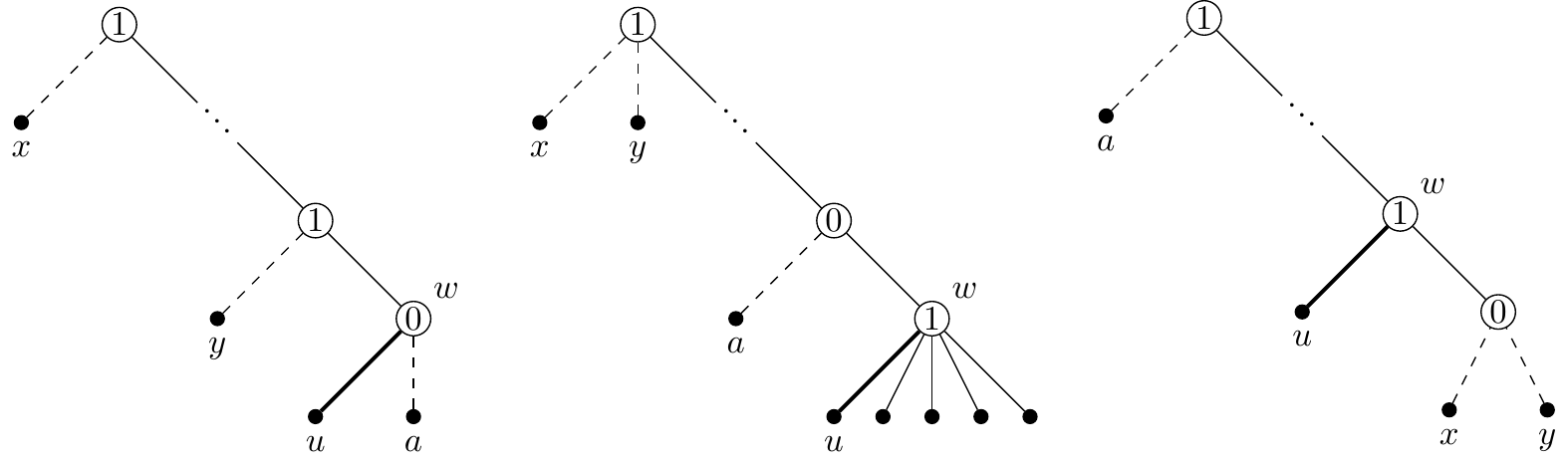}
\caption{Illustrating the cases considered in the proof of Lemma~\ref{char:cotree}.}\label{fig:cotree}
\end{figure}

\begin{lemma}\label{char:cotree}
Let $T$ be a cotree of a cograph $G$ and let $u$ be a vertex of $G$.
Then, $u$ is avoidable in $G$ if and only if either $p(u)$ is a 0-node or $p(u)$ is a full 1-node.
\end{lemma}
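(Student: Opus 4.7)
The plan is to prove each direction separately using property~(i) (adjacency in $G$ iff the least common ancestor in $T$ is a 1-node) together with the type-alternation guaranteed by property~(iii). Throughout I analyse $N_G(u)$ via the location of neighbors and non-neighbors of $u$ relative to the subtree rooted at $p(u)$.

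For the forward direction, suppose $p(u)$ is either a 0-node or a full 1-node and consider an arbitrary induced $P_3$ $x$-$u$-$y$ with $xy\notin E_G$; the goal is to exhibit an induced $C_4$ through it. If $p(u)$ is a 0-node and is the root, then $u$ is isolated and hence trivially avoidable; otherwise, by property~(ii) I pick a leaf $z$ inside the subtree of some sibling of $u$. Then $\mathrm{lca}(z,u)=p(u)$ is a 0-node, so $uz\notin E_G$; and for any neighbor $v$ of $u$, the 1-node $h_v := \mathrm{lca}(u,v)$ is a proper ancestor of $p(u)$ and $z$ lies in the same subtree of $h_v$ as $u$, so $\mathrm{lca}(z,v)=h_v$ and thus $zv\in E_G$. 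Applied to $v=x,y$ this yields the induced cycle $x$-$u$-$y$-$z$-$x$. If $p(u)$ is a full 1-node, I first observe that both $x$ and $y$ must lie outside the subtree of $p(u)$: a sibling of $u$ is necessarily a leaf, so if $x$ were a sibling then $\mathrm{lca}(x,y)=\mathrm{lca}(u,y)$ (checked case by case on whether $y$ is or is not a sibling of $u$) would be a 1-node, forcing $xy\in E_G$ and contradicting our assumption. Setting $g := p(p(u))$ (a 0-node by property~(iii); if $p(u)$ is the root then $G$ is a clique and $u$ is simplicial) I pick a leaf $z$ in a sibling subtree of $p(u)$ under $g$, verify $\mathrm{lca}(z,u)=g$ so $uz\notin E_G$, and rerun the same LCA computation to conclude $zx,zy\in E_G$, producing the desired induced $C_4$.

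For the reverse direction, suppose $p(u)$ is a 1-node that is not full; by property~(iii) it has a 0-node child $w$, which by property~(ii) has at least two children. Pick leaves $a,b$ lying in two distinct subtrees under $w$: then $\mathrm{lca}(a,u)=\mathrm{lca}(b,u)=p(u)$ is a 1-node, so $a,b\in N_G(u)$, while $\mathrm{lca}(a,b)=w$ is a 0-node, so $ab\notin E_G$. To conclude non-avoidability via Observation~\ref{obs:separtor}, I show that $N_G[u]\setminus\{a,b\}$ is an $(a,b)$-separator. The key point is that every non-neighbor $z$ of $u$ lies outside the subtree of $p(u)$ (since $p(u)$ is a 1-node), and because $a$ lies inside that subtree we have $\mathrm{lca}(z,a)=\mathrm{lca}(z,u)$, which is a 0-node, so $za\notin E_G$; likewise $zb\notin E_G$. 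Hence after deleting $N_G[u]\setminus\{a,b\}$ both $a$ and $b$ are isolated and fall in distinct components.

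The main technical care throughout is tracking which subtree of each 1-node ancestor contains the various vertices $u,x,y,z$, so that we can collapse $\mathrm{lca}(z,v)$ to $\mathrm{lca}(u,v)$; beyond this bookkeeping the argument really reduces to three short LCA computations, one for each of the three configurations of $p(u)$ (0-node, full 1-node, non-full 1-node).
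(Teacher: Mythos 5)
Your proof is correct. The forward direction (the cases where $p(u)$ is a 0-node or a full 1-node) follows essentially the same route as the paper: in both arguments one locates a non-neighbor $z$ of $u$ in the appropriate subtree (a sibling subtree of $u$ under the 0-node, respectively a sibling subtree of $p(u)$ under the 0-node grandparent) and uses the least-common-ancestor property to show $z$ is adjacent to every neighbor of $u$ outside that subtree, closing each induced $P_3$ into a $C_4$; your explicit handling of the root cases and of the situation where a $P_3$-endpoint lies inside $T_{p(u)}$ matches the paper's treatment. Where you genuinely diverge is the reverse direction (non-full 1-node): the paper takes an induced $x$--$y$ path in $G-u$, uses $P_4$-freeness to argue it has a single internal vertex $a$, and then shows $ua\in E(G)$ by an LCA computation, whereas you prove the stronger and more direct statement that \emph{every} non-neighbor of $u$ lies outside $T_{p(u)}$ and is therefore non-adjacent to both chosen vertices $a,b$ under the 0-node child, so $N_G[u]\setminus\{a,b\}$ isolates $a$ and $b$ and Observation~\ref{obs:separtor} applies. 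Your variant avoids invoking $P_4$-freeness as a separate structural fact (it is absorbed into the cotree) and dispenses with the path analysis, at the cost of no extra machinery; both arguments are equally valid, and yours is arguably a bit cleaner in that case.
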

\begin{proof}
We first introduce some notation.
For a node $w$ of $T$, we let $T_w$ be the subtree of $T$ rooted at $w$ and we denote by $V(T_w)$ the set of leaves in $T_w$.
Recall that $V(T_w)$ corresponds to a subset of vertices of $G$.
By property~(i) observe that all the vertices of $V(T_w)$ are either adjacent or non-adjacent to a vertex $x$ of $V(G) \setminus V(T_w)$.
Let $r$ be the root of $T$ and let $w$ be the parent node of vertex $u$, that is $w = p(u)$.
We consider separately the following cases (see Figure~\ref{fig:cotree}).

\begin{itemize}
\item Assume that $w$ is a 0-node in $T$. We show that $u$ is avoidable in $G$.
Consider two vertices $x,y \in N_G(u)$.
By property~(i), $x,y \in V(T_r)\setminus V(T_w)$ and any vertex of $V(T_w)$ is non-adjacent to $u$.
Moreover, property~(ii) implies that there is a vertex $a \in V(T_w)\setminus \{u\}$ such that $au \notin E(G)$.
Thus, both $x$ and $y$ are adjacent to $u$ and $a$, since $x,y \notin V(T_w)$.
Hence, regardless of whether $x$ and $y$ being adjacent, there is a path between $x$ and $y$ that does not contain any vertex of $N_G(u)$.

\item Assume that $w$ is a full 1-node in $T$. We show that $u$ is avoidable in $G$.
Consider two vertices $x,y \in N_G(u)$. If $x \in V(T_w)$ then $xy \in E(G)$ because either $y \in V(T_w)$ as a leaf vertex, or $y \notin V(T_w)$ and $y$ is adjacent to every vertex of $V(T_w)$ as $uy \in E(G)$. Suppose that both $x,y \in V(T_r)\setminus V(T_w)$.
Let $P(r,w)$ be the unique path of $T$ between the root $r$ and the 1-node $w$.
Since $x,y \in N_G(u)$, there are 1-nodes $w_x$ and $w_y$ (not necessarily distinct) on $P(r,w)$ such that $x \in V(T_{w_x})$ and $y \in V(T_{w_y})$.
Now consider the parent $w'$ of $w$ in $T$. By property~(iii), $w'$ exists and is a 0-node of $T$.
Thus there is a vertex $a \in V(T_{w'})\setminus V(T_{w})$ that is non-adjacent to $u$.
Since the least common ancestor of $x$ and $a$ is $w_x$, by property~(i) we have $xa \in E(G)$. Similarly, we have $ya \in E(G)$.
Hence there is a path between $x$ and $y$ that contains a non-neighbor of $u$, which shows that $u$ is an avoidable vertex of $G$.

\item Assume that $w$ is a 1-node that is not full in $T$. We show that $u$ is non-avoidable in $G$.
Let $w'$ be a non-leaf child of $w$. By property~(iii), $w'$ is a 0-node.
Moreover, property~(ii) implies that there are vertices $x,y \in V(T_{w'})$ for which their least common ancestor is $w'$.
Thus $xy \notin E(G)$ and $ux,uy \in E(G)$, because $w$ is a 1-node.
If there is no path between $x$ and $y$ in $G - u$ then $u$ is non-avoidable.
Let $A$ be the internal vertices of an induced path between $x$ and $y$ in $G - u$.
Since $G$ is $P_4$-free, every vertex of $A$ is adjacent to both $x$ and $y$, so that $A=\{a\}$.
We show that $u$ is adjacent to $a$.
To see this, observe that $a$ does not belong to $V(T_{w'})$, since $w'$ is the 0-node that is the least common ancestor of $x$ and $y$.
Hence $a$ belongs to $V(T_{r}) \setminus V(T_{w'})$ and its least common ancestor $w_a$ with $x$ and $y$ is a 1-node.
This means that $w_a$ is an ancestor of $w'$ that is a 1-node in $T$.
As $w'$ is a child of $w$, we deduce that $w_a$ is the least common ancestor of $a$ and $u$.
Thus $ua \in E(G)$, which means that $u$ is non-avoidable, since there is no path between $x$ and $y$ that avoids any neighbor of $u$.
\end{itemize}
Therefore, we have a complete characterization of $u$ since all cases have been considered depending on the parent of $u$ in $T$.
\end{proof}

Thus, we deduce the following optimal algorithm for the vertices of a cograph $G$.
Note that, given a cograph $G$, its corresponding cotree $T$ can be constructed in $O(n+m)$ time \cite{CorneilPS85}.

\begin{theorem}\label{theo:cographs}
Given a cotree $T$ of a cograph $G$, there is an $O(n)$-time algorithm that lists all avoidable vertices of $G$.
\end{theorem}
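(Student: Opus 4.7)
The plan is to use Lemma~\ref{char:cotree} as the algorithmic characterization and simply traverse the cotree $T$ once, outputting each leaf whose parent passes the local test. Since the cotree has $O(n)$ nodes and edges (each internal node has at least two children, so there are at most $n-1$ internal nodes), any constant-work-per-node traversal runs in $O(n)$ time, matching the required bound.

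The first step is to classify, for each internal node $w$ of $T$, whether it is a 0-node, a full 1-node, or a non-full 1-node. The node type (0 or 1) is already stored in the cotree by assumption. To decide \emph{fullness} of a 1-node, I would make a single pass over the internal nodes and, for each one, examine its children and set a boolean flag \textsf{full}$(w)$ to true iff every child of $w$ is a leaf. Since every child is inspected exactly once across all parents, the total work is proportional to the number of parent-child pairs in $T$, which is $O(n)$.

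The second step is to iterate over the leaves of $T$. For each leaf $u$, let $w=p(u)$. By Lemma~\ref{char:cotree}, $u$ is reported as avoidable iff $w$ is a 0-node, or $w$ is a 1-node with \textsf{full}$(w)=\textsf{true}$. Each leaf corresponds to a unique vertex of $G$, so this pass performs $O(1)$ work per vertex and lists precisely the avoidable vertices.

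There is really no conceptual obstacle here since the hard work is done by Lemma~\ref{char:cotree}; the only things to verify are (a) that the cotree is available in the assumed form (which it is by hypothesis, and can be built in $O(n+m)$ time by \cite{CorneilPS85} if it is not), and (b) that both passes are genuinely linear in the size of $T$, which follows from the fact that $T$ has $O(n)$ nodes and each edge of $T$ is traversed $O(1)$ times. Combining the two passes yields an $O(n)$-time listing algorithm, as claimed.
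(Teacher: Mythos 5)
Your proposal is correct and follows essentially the same route as the paper: mark in one linear pass which internal nodes have only leaf children, then apply Lemma~\ref{char:cotree} to each leaf by inspecting its parent's type and fullness flag, for a total of $O(n)$ time since $T$ has $O(n)$ nodes.
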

\begin{proof}
We first mark the internal nodes of the cotree $T$ that have as children only leaves of $T$.
By a single bottom-up traversal from the leaves of $T$, this can be done in $O(n)$ time.
Thus applying Lemma~\ref{char:cotree} in a straightforward way on the cotree $T$ with the marked information, results in an $O(n)$-time algorithm.
\end{proof}

\subsection{Sparse or dense graphs}
Here we extend the previous notions on cographs and show how to handle the cases in which $m < n$ (sparse graphs) or $\overline{m} < n$ (dense graphs).

It is not difficult to handle sparse graphs. Observe that $m <n$ implies that $G$ is disconnected or $G$ is a tree. The connectedness assumption of the input graph $G$ follows from the fact that a vertex $u$ is avoidable in $G$ if and only if $u$ is avoidable in the connected component containing $u$, since there are no paths between vertices of different components. Moreover, trees have a trivial solution as the leaves are exactly the set of avoidable vertices. We include both properties in the following statement.

\begin{obs}\label{obs:connected}
Let $u$ be a vertex of $G$ and let $C(u)$ be the connected component of $G$ containing $u$. Then $u$ is avoidable if and only if $u$ is avoidable in $G[C(u)]$. Moreover, if $G$ is a tree then $u$ is avoidable if and only if $u$ is a leaf in $G$.
\end{obs}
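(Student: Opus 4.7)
My plan is to use the equivalent characterization given in Definition~\ref{def:avoidablevertex}: a vertex $u$ is avoidable iff $d_G(u)\le 1$, or for every pair $x,y\in N_G(u)$ the vertices $x,y$ lie in the same component of $G-(N_G[u]\setminus\{x,y\})$. The key observation is that the whole avoidability test takes place inside $C(u)$.

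For the first part, I would argue as follows. First note that $N_G(u)\subseteq C(u)$, because every neighbor of $u$ is joined to $u$ by an edge and hence lies in the connected component of $u$; consequently $N_G(u)=N_{G[C(u)]}(u)$ and $d_G(u)=d_{G[C(u)]}(u)$. This already handles the degree $\le 1$ case. Otherwise, fix $x,y\in N_G(u)$ and set $S=N_G[u]\setminus\{x,y\}$. I would show that $x$ and $y$ lie in the same component of $G-S$ if and only if they lie in the same component of $G[C(u)]-S$. The forward direction uses the fact that any $x$-$y$ path in $G-S$ starts at $x\in C(u)$; since traversing an edge cannot leave a connected component, every vertex of the path belongs to $C(u)$, so the same path survives in $G[C(u)]-S$. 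The reverse direction is immediate because $G[C(u)]-S$ is a subgraph of $G-S$. Thus the avoidability criterion of Definition~\ref{def:avoidablevertex} evaluates identically in $G$ and in $G[C(u)]$, proving the first claim.

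For the second part, suppose $G$ is a tree. If $u$ is a leaf then $d_G(u)\le 1$, so $u$ is avoidable directly from Definition~\ref{def:avoidablevertex}. Conversely, if $u$ is not a leaf then it has two distinct neighbors $x,y$, which are non-adjacent in $G$ since $G$ is acyclic (an edge $xy$ would create a triangle). Thus $x,u,y$ is a $P_3$ with middle vertex $u$. Any induced cycle containing this $P_3$ would in particular be a cycle of $G$, contradicting the tree assumption, so no such cycle exists and $u$ is not avoidable.

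The argument is essentially a bookkeeping one; the only step that needs a little care is the claim that paths avoiding $S\subseteq N_G[u]$ cannot escape $C(u)$, but this is a direct consequence of the definition of a connected component and does not rely on any special property of $S$. I do not anticipate any real obstacle.
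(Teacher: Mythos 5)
Your proof is correct and follows essentially the same reasoning the paper gives for this observation: paths used in the avoidability test cannot leave the component $C(u)$ since $N_G[u]\subseteq C(u)$, and in a tree a leaf is avoidable by the degree condition while a non-leaf lies on a $P_3$ that no (induced) cycle can contain. The only negligible caveat, shared by the statement itself, is the degenerate one-vertex tree, where the vertex has degree $0$ rather than being a leaf.
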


Next we describe that we can follow almost the same approach on the complement of $G$.
For doing so, we first prove the following result which interestingly relates avoidability on $G$ and $\overline{G}$.
Note, however, that the converse is not necessarily true.

\begin{lemma}\label{lem:isolated}
Let $G$ be a graph and let $u$ be a non-avoidable vertex. Then, $u$ is avoidable in $\overline{G}$.
\end{lemma}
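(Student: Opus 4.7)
The plan is to apply Observation~\ref{obs:separtor} to extract a separator structure witnessing non-avoidability of $u$ in $G$, and then to leverage the resulting three-way partition of $V(G)\setminus N_G[u]$ to build short paths in $\overline{G}$ between arbitrary pairs of $\overline{G}$-neighbors of $u$ while avoiding $N_{\overline{G}}[u]$.

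First, by Observation~\ref{obs:separtor} applied to the non-avoidable $u$, I can fix $x,y \in N_G(u)$ that lie in different connected components of $G' := G - (N_G[u] \setminus \{x,y\})$. Let $A$ denote the component of $G'$ containing $x$, let $B$ denote the component containing $y$, and let $C$ be the union of all remaining components of $G'$ (possibly empty). A direct computation gives $V(G') = \{x,y\} \cup (V(G) \setminus N_G[u])$, so every vertex of $V(G) \setminus N_G[u]$ lies in exactly one of $A \setminus \{x\}$, $B \setminus \{y\}$, $C$. Crucially, since $A,B,C$ come from distinct components of the induced subgraph $G'$, no edge of $G$ runs between any two of them; equivalently, in $\overline{G}$ every vertex of one set is adjacent to every vertex of another.

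Next, to verify the avoidability condition for $u$ in $\overline{G}$, I pick any $x',y' \in N_{\overline{G}}(u) = V(G) \setminus N_G[u]$ and must connect them in $\overline{G} - (N_{\overline{G}}[u] \setminus \{x',y'\})$, which equals $\overline{G}[\{x',y'\} \cup N_G(u)]$. The proof then splits on which of $A\setminus\{x\}$, $B\setminus\{y\}$, $C$ contain $x'$ and $y'$. If they lie in different sets then $x'y' \notin E(G)$ by the no-edges-across property, so $x'y'$ is already an edge of $\overline{G}$. If both lie in $A\setminus\{x\}$ or both lie in $C$, then $y \in B \cap N_G(u)$ is $G$-nonadjacent to each of them, so $x' - y - y'$ is a length-two path in $\overline{G}$ whose intermediate vertex belongs to $N_G(u)$; symmetrically, if both lie in $B\setminus\{y\}$, the vertex $x$ serves as the intermediate.

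The main thing to get right is the dual role of the witnesses $x$ and $y$: they lie in $N_G(u)$, which makes them legal intermediate vertices for the path we build in $\overline{G}[\{x',y'\} \cup N_G(u)]$, while simultaneously lying in $V(G')$ as the distinguished representatives of components $A$ and $B$ which ensures they are $G$-nonadjacent to any vertex on the ``other side'' of the separation. Once the three-way partition and this observation are in place, the case analysis is essentially forced and no further ideas are required.
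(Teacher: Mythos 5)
Your proof is correct and follows essentially the same route as the paper's: both extract the separator structure of Observation~\ref{obs:separtor}, note that non-neighbors of $u$ split across components with all cross-component pairs adjacent in $\overline{G}$, and use a $G$-neighbor of $u$ from a different component as the middle vertex of a length-two path when both endpoints fall in the same component. The only cosmetic difference is that you use the witnesses $x,y$ themselves as the intermediate vertices, whereas the paper picks arbitrary vertices $w_1\in C_1\cap N_G(u)$ and $w_2\in C_2\cap N_G(u)$.
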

\begin{proof}
Since $u$ is a non-avoidable vertex in $G$, there is a separator $S$ that contains $u$ such that $S \subset N_{G}[u]$ by Observation~\ref{obs:separtor}.
Let $C_1, \ldots, C_k$ be the connected components of $G - S$, with $k \geq 2$.
Notice that at least two components of $C_1, \ldots, C_k$ contain a neighbor of $u$.
Without loss of generality, assume that $C_1 \cap N_{G}(u) \neq \emptyset$ and $C_2 \cap N_{G}(u) \neq \emptyset$.
Consider the complement $\overline{G}$ and let $x,y$ be two neighbors of $u$ in $\overline{G}$.
Observe that both $x$ and $y$ do not belong to $S$, since $S \subset N_{G}[u]$.
Thus $x \in C_i$  and $y \in C_j$, for $1 \leq i,j \leq k$.
We show that either $xy \in E(\overline{G})$ or there is a path in $\overline{G}$ between $x$ and $y$ that avoids vertices of $N_{\overline{G}}(u)$.
If $i \neq j$ then $xy \in E(\overline{G})$, because every vertex of $C_i$ is adjacent to every vertex of $C_j$ in $\overline{G}$.
Suppose that $x,y \in C_i$.
If $C_i \neq C_1$ then there is a vertex $w_1 \in C_1 \cap N_{G}(u)$ such that $w_1u \notin E(\overline{G})$ and $w_1x, w_1y \in E(\overline{G})$.
If $C_i = C_1$ then there is a vertex $w_2 \in C_2 \cap N_{G}(u)$ such that $w_2u \notin E(\overline{G})$ and $w_2x, w_2y \in E(\overline{G})$.
Thus in both cases there is a path of length two between $x$ and $y$ that avoids vertices $N_{\overline{G}}(u)$.
Therefore, $u$ is avoidable in $\overline{G}$.
\end{proof}

We next deal with the case in which $\overline{G}$ is disconnected. Notice that if $G=K_n$ then every vertex of $G$ is simplicial and thus avoidable.

\begin{lemma}\label{lem:disconne}
Let $G \neq K_n$, $u \in V(G)$, and let $\overline{C}(u)$ be the co-component containing $u$.
Then, $u$ is avoidable in $G$ if and only if $|\overline{C}(u)|>1$ and $u$ is avoidable in $G[\overline{C}(u)]$.
\end{lemma}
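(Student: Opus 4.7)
The plan is to exploit the structural observation that in $G$, every vertex outside the co-component $\overline{C}(u)$ is adjacent to every vertex inside it (since no edge of $\overline{G}$ crosses between different components of $\overline{G}$). Writing $W = V(G) \setminus \overline{C}(u)$ and $H = G[\overline{C}(u)]$, this yields the clean decomposition $N_G(u) = W \cup N_H(u)$, and for any $x,y \in \overline{C}(u)$ one has $W \subseteq N_G[u] \setminus \{x,y\}$. Moreover, when $|\overline{C}(u)|>1$, the subgraph $\overline{G}[\overline{C}(u)]$ is connected and $u$ is not isolated there, so there exists $z \in \overline{C}(u) \setminus N_H[u]$; this $z$ is a non-neighbor of $u$ in $G$ but is adjacent in $G$ to every vertex of $W$.

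First I would settle the necessity of $|\overline{C}(u)|>1$. If $|\overline{C}(u)|=1$ then $u$ is universal in $G$, and since $G \neq K_n$ there exist non-adjacent $x,y \in N_G(u)$; then $G - (N_G[u] \setminus \{x,y\})$ is the edgeless graph on $\{x,y\}$, so by Observation~\ref{obs:separtor} the vertex $u$ is non-avoidable. Assuming henceforth $|\overline{C}(u)|>1$, for the \emph{only if} direction I would fix $x,y \in N_H(u) \subseteq N_G(u)$: avoidability of $u$ in $G$ produces a path (or edge) from $x$ to $y$ in $G - (N_G[u]\setminus\{x,y\})$, and since this graph contains no vertex of $W$, the path lies entirely in $\overline{C}(u)$ and therefore in $H - (N_H[u] \setminus \{x,y\})$. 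Hence $u$ is avoidable in $H$.

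For the \emph{if} direction I would take $x,y \in N_G(u)$ and split on where they sit. If both $x,y \in N_H(u)$, avoidability of $u$ in $H$ gives a connecting path in $H - (N_H[u] \setminus \{x,y\})$; it uses no $W$-vertex, so it still avoids $N_G[u]\setminus\{x,y\} = W \cup (N_H[u]\setminus\{x,y\})$ in $G$. If both $x,y \in W$, the witness $z$ from the setup is a common neighbor of $x$ and $y$ that lies outside $N_G[u]$, so the two-edge path through $z$ connects $x$ and $y$ while avoiding $N_G[u] \setminus \{x,y\}$. Finally, if one of $x,y$ lies in $W$ and the other in $N_H(u)$, the complete join between $W$ and $\overline{C}(u)$ makes them adjacent in $G$, so the trivial one-edge path suffices.

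The only step that requires real care is the subcase where both endpoints lie in $W$: it is precisely there that $|\overline{C}(u)|>1$ enters, through the non-neighbor $z$ of $u$ inside $\overline{C}(u)$ supplied by the connectivity of $\overline{G}[\overline{C}(u)]$. Everything else is bookkeeping about which vertices survive in $G - (N_G[u] \setminus \{x,y\})$, driven by the decomposition $N_G[u] = W \cup N_H[u]$.
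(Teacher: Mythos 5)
Your proof is correct and follows essentially the same route as the paper's: the case $|\overline{C}(u)|=1$ is handled via universality of $u$, and for $|\overline{C}(u)|>1$ both directions rest on the complete join between $\overline{C}(u)$ and the rest of $G$ together with a non-neighbor $z$ of $u$ inside $\overline{C}(u)$ (the paper's vertex $a$). The only cosmetic difference is that you argue the forward direction directly, where the paper argues by contradiction.
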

\begin{proof}
Assume first that $\overline{C}(u) = \{u\}$. Then $u$ is universal in $G$. Since $G \neq K_n$, there are vertices $x,y$ such that $xy \notin E(G)$. As any path between $x$ and $y$ contains a neighbor of $u$, we deduce that $u$ is non-avoidable. In the following we assume that $|\overline{C}(u)|>1$. This assumption implies that there is a vertex $a \in \overline{C}(u)$ such that $ua \notin E(G)$.
Also notice that every vertex of $\overline{C}(u)$ is adjacent to every vertex of $V(G) \setminus \overline{C}(u)$.

\begin{itemize}
\item Suppose that $u$ is avoidable in $G$. Assume for contradiction that $u$ is non-avoidable in $G[\overline{C}(u)]$. Then there are vertices $x,y$ in $\overline{C}(u)$ such that $x,y \in N_{G}(u)$, $xy \notin E(G)$, and every path (if it exists) between $x$ and $y$ in $G[\overline{C}(u)]$ contains a neighbor of $u$.
Since $G[\overline{C}(u)]$ is an induced subgraph of $G$ and $u$ is avoidable in $G$, there is path in $G$ between $x$ and $y$ that contains a vertex $z$ of $V(G) \setminus \overline{C}(u)$ such that $zu \notin E(G)$.
Then, however, we reach a contradiction to the fact that every vertex of $\overline{C}(u)$ is adjacent to every vertex of $V(G) \setminus \overline{C}(u)$, so that $zu \in E(G)$ for any such vertex $z$.
Thus $u$ is avoidable in $G[\overline{C}(u)]$.

\item Suppose that $u$ is avoidable in $G[\overline{C}(u)]$. We show that $u$ is avoidable in $G$. Consider two vertices $x,y \in N_{G}(u)$. If both vertices $x,y$ belong to $\overline{C}(u)$ then the avoidability of $u$ in $G[\overline{C}(u)]$ carries along $G$, since $G[\overline{C}(u)]$ is an induced subgraph of $G$. If $x \in \overline{C}(u)$ and $y \in V(G) \setminus \overline{C}(u)$ then $xy \in E(G)$. Now assume that both vertices $x,y$ belong to $V(G) \setminus \overline{C}(u)$. Then the path $\mypath{x,a,y}$ with $a \in \overline{C}(u)$ and $ua \notin E(G)$ is the desired path between $x$ and $y$. Thus $u$ is avoidable in $G$.
\end{itemize}
Therefore both directions show the claimed statement.
\end{proof}

In general, avoidability is not a hereditary property with respect to induced subgraphs, even when restricted to the removal of non-avoidable vertices.
However, as we show next, the removal of universal vertices does not affect the rest of the graph.

\begin{lemma}\label{lem:universal}
Let $G$ be a graph and let $w$ be a universal vertex of $G$. Then $w$ is avoidable if and only if $G$ is a complete graph.
Moreover, any vertex $u \in V(G)\setminus\{w\}$ is avoidable in $G$ if and only if $u$ is avoidable in $G-w$.
\end{lemma}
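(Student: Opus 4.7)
The plan is to split the statement into its two parts and handle each by a direct application of Definition~\ref{def:avoidablevertex} together with the fact that $w$, being universal, is adjacent to every other vertex of $G$.

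For the first part, the easy direction is that if $G$ is complete then $w$ is simplicial, hence avoidable. For the converse, suppose $G$ is not complete, so there exist distinct vertices $x,y \in V(G)\setminus\{w\}$ with $xy \notin E(G)$. Because $w$ is universal, both $x$ and $y$ lie in $N_G(w)$, so we may use them to witness non-avoidability: $N_G[w] \setminus \{x,y\} = V(G)\setminus\{x,y\}$, hence $G - (N_G[w]\setminus\{x,y\})$ is the edgeless graph on $\{x,y\}$, in which $x$ and $y$ lie in different components. The trivial small cases ($|V(G)|\le 2$) are handled by the observation that then $w$ has degree $\le 1$ and $G$ is complete.

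For the second part, fix $u \in V(G)\setminus\{w\}$. The crucial observation is that $N_G[u] = N_{G-w}[u] \cup \{w\}$, so for any pair $x,y \in V(G)\setminus\{u,w\}$ we have
\[
G - (N_G[u]\setminus\{x,y\}) \;=\; (G-w) - (N_{G-w}[u]\setminus\{x,y\}),
\]
because the extra vertex removed on the left-hand side is precisely $w$. Hence the connectivity condition between $x$ and $y$ in Definition~\ref{def:avoidablevertex} is identical in $G$ and in $G-w$ for pairs of neighbors $x,y \in N_G(u)\setminus\{w\} = N_{G-w}(u)$. The only pairs of neighbors of $u$ in $G$ that do not correspond to pairs in $G-w$ are those of the form $\{x,w\}$ with $x \in N_{G-w}(u)$; but since $w$ is universal, $xw \in E(G)$ and $w$ remains in $G - (N_G[u]\setminus\{x,w\})$, so $x$ and $w$ are automatically in the same component and the condition is trivially met. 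Combining both observations, $u$ is avoidable in $G$ if and only if $u$ is avoidable in $G-w$.

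The only mild obstacle is to address the degree-$\le 1$ clauses of Definition~\ref{def:avoidablevertex} so that the equivalence is not vacuously broken: if $d_{G-w}(u) = 0$, then $d_G(u)=1$ (only $w$), so $u$ is avoidable in both graphs; if $d_{G-w}(u)=1$, then $d_G(u)=2$ with neighbors $\{x,w\}$, and the single pair is the trivial case already handled. In all remaining cases, the induced-subgraph identity above reduces the problem to checking the same connectivity requirement in both graphs, giving the desired equivalence.
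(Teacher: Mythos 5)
Your proposal is correct and takes essentially the same approach as the paper: the second part rests on the observation that $w\in N_G[u]$ is removed in every connectivity test of Definition~\ref{def:avoidablevertex} (which you package cleanly as the identity $G-(N_G[u]\setminus\{x,y\})=(G-w)-(N_{G-w}[u]\setminus\{x,y\})$), together with the fact that pairs involving $w$ are trivially satisfied because $w$ is adjacent to everything. The only cosmetic difference is that for the first part you give the direct witness argument (a non-adjacent pair $x,y\subseteq N_G(w)$) that the paper instead obtains by citing Lemma~\ref{lem:disconne}.
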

\begin{proof}
First statement follows by Lemma~\ref{lem:disconne} and from the fact that every vertex of a complete graph is simplicial.
For the second statement, assume that $u$ is avoidable in $G$.
We show that $u$ is avoidable in the graph $G' = G-w$.
Consider two vertices $x,y \in N_{G'}(u)$. If $xy \in E(G)$ then clearly $xy \in E(G')$.
Suppose that $xy \notin E(G)$. Then, as $u$ is avoidable in $G$, there is a path $P$ between $x$ and $y$ in $G$. Since $w$ is universal in $G$, $w$ does not belong to $P$.
Thus $P$ exists in $G'$ which shows that $u$ is avoidable in $G'$.
For the reverse direction, assume that $u$ is avoidable in $G'= G-w$.
Observe that any two vertices $x,y \in N_{G}(u) \setminus \{w\}$ fulfill the necessary conditions in $G$, since $G'$ is as induced subgraph of $G$.
Moreover, $w \in N_{G}(u)$ and for any vertex $x \in N_{G}(u) \setminus \{w\}$, we have $wx\in E(G)$.
Therefore $u$ remains avoidable in $G$.
\end{proof}

To conclude the cases for which $\overline{m} < n$, we next consider graphs whose complement is a tree.
By Observation~\ref{obs:connected} we restrict ourselves on connected graphs.

\begin{lemma}\label{lem:compltree}
Let $G$ be a connected graph such that $\overline{G}$ is a tree $T$.
A vertex $u$ of $G$ is avoidable if and only if $u$ is a non-leaf vertex in $T$.
\end{lemma}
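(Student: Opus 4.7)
My plan is to prove each direction of the equivalence separately, leveraging the identity $V(G) \setminus N_G[u] = N_T(u)$; write $W := N_T(u)$ for brevity throughout.

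For the forward direction, suppose $u$ is a leaf of $T$ with unique $T$-neighbor $w$, so $W = \{w\}$ and $V(G) \setminus N_G[u] = \{w\}$. The plan is to exhibit two vertices $x, y \in N_G(u)$ that are non-adjacent in $G$ and for which $G[\{x, y, w\}]$ contains no $xy$-path. First I will observe that $G$ being connected (together with $n \geq 3$, forced by $G$ connected plus $\overline{G}$ a tree) prevents $T$ from being a star centered at $w$, since otherwise $w$ would be isolated in $\overline{T} = G$ while the remaining vertices would form a clique, contradicting connectivity. Non-starness at $w$ then yields a $T$-neighbor $v \neq u$ of $w$ that is itself a non-leaf of $T$, hence has a further $T$-neighbor $z \neq w$; automatically $z \neq u$ since $u$'s sole $T$-neighbor is $w$. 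Setting $x := v$ and $y := z$, both lie in $N_G(u)$; the edge $vz \in E(T)$ gives $xy \notin E(G)$, and $vw \in E(T)$ gives $xw \notin E(G)$, so $x$ is isolated from $y$ in $G[\{x, y, w\}]$. Applying Observation~\ref{obs:separtor} with the separator $N_G[u] \setminus \{x, y\}$ concludes that $u$ is non-avoidable.

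For the reverse direction, suppose $u$ is a non-leaf in $T$, so $|W| \geq 2$. Given any pair $x, y \in N_G(u)$ with $xy \notin E(G)$ (equivalently $xy \in E(T)$), my aim is to produce $w \in W$ with $wx, wy \notin E(T)$, i.e., $wx, wy \in E(G)$, which furnishes the induced path $x, w, y$ inside $G - (N_G[u] \setminus \{x, y\})$. I will partition $W$ by $T$-adjacency: let $W_x := \{w \in W : wx \in E(T)\}$ and $W_y := \{w \in W : wy \in E(T)\}$. Since $xy \in E(T)$ and $T$ is triangle-free, $W_x \cap W_y = \emptyset$. It remains to rule out $W = W_x \cup W_y$. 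If both $W_x$ and $W_y$ are non-empty, picking $w_x \in W_x$ and $w_y \in W_y$ exhibits two distinct paths between $w_x$ and $w_y$ in $T$, namely the length-two path $w_x, u, w_y$ and the length-three path $w_x, x, y, w_y$, contradicting the tree property. Otherwise, WLOG $W = W_y$; the hypothesis $|W| \geq 2$ then yields distinct $w_1, w_2 \in W_y$ producing the cycle $u, w_1, y, w_2, u$ in $T$, again a contradiction. Hence some $w \in W$ is non-adjacent in $T$ to both $x$ and $y$, and $u$ is avoidable.

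The main obstacle will be the reverse direction, specifically verifying that the two cycle-based contradictions use genuinely distinct edges in $T$; the non-leaf hypothesis on $u$, which supplies $|W| \geq 2$, is used precisely to close off the degenerate subcase where all of $W$ lies on a single side of the partition.
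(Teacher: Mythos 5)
Your proof is correct; it establishes the same statement by a more self-contained route than the paper's. For ``non-leaf $\Rightarrow$ avoidable in $G$'' the paper has a one-line argument: a non-leaf vertex of $T$ is non-avoidable in $\overline{G}=T$, so Lemma~\ref{lem:isolated} immediately makes it avoidable in $G$. You instead redo this direction inside $T$: for every $T$-adjacent pair $x,y\in N_G(u)$ you use the acyclicity of $T$ and $d_T(u)\geq 2$ to find $w\in N_T(u)$ with $wx,wy\notin E(T)$, i.e.\ a common $G$-neighbour of $x$ and $y$ outside $N_G[u]$; your partition of $N_T(u)$ into $W_x,W_y$ and the two cycle contradictions are sound (the five vertices $u,w_x,x,y,w_y$, respectively the four vertices $u,w_1,y,w_2$, are indeed pairwise distinct because $x,y\notin N_T(u)$ and $u\notin N_T(u)$), and this is in effect the argument of Lemma~\ref{lem:isolated} specialized to trees. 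For ``leaf $\Rightarrow$ non-avoidable'' the paper cites \cite{CorneilLB81} to place $u$ on a $P_4=\langle u,a,x,y\rangle$ of $T$; your observation that $T$ cannot be a star centered at $u$'s unique neighbour $w$ (else $w$ would be isolated in $G$, contradicting connectivity) constructs exactly such a $P_4=\langle u,w,v,z\rangle$ by hand, after which your witness pair $(v,z)$, together with the fact that the only vertex of $G$ outside $N_G[u]$ is $w$ and $vw\notin E(G)$, coincides with the paper's argument. The trade-off: the paper's proof is shorter by leaning on Lemma~\ref{lem:isolated} and the quoted $P_4$ fact, while yours is fully elementary. Two degenerate remarks that do not affect correctness: $n\geq 3$ is not forced by connectivity of $G$ plus $\overline{G}$ being a tree alone (the one-vertex graph qualifies), though it is forced once $u$ is assumed to be a leaf; and in the reverse direction $|N_T(u)|\geq 2$ presumes $n\geq 2$, the case $n=1$ being trivially avoidable since $N_G(u)=\emptyset$.
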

\begin{proof}
We consider the vertices of $T$. Let $u$ be a non-leaf vertex of $T$. Then $u$ is a non-avoidable vertex in $\overline{G}$. Thus by Lemma~\ref{lem:isolated} $u$ is avoidable in $G$.

Now assume that $u$ is a leaf vertex of $T$, and thus avoidable in $\overline{G}$. We prove that $u$ is non-avoidable in $G$.
Since both graphs $G$ and $\overline{G}$ are connected, $u$ belongs to a $P_4$ in $T$ \cite{CorneilLB81}.
Let $\mypath{u,a,x,y}$ be a $P_4$ in $T$ that contains $u$.
Observe that $u$ is adjacent to every vertex of $V(G) \setminus \{a\}$ in $G$.
Consider the vertices $x$ and $y$ of the $P_4$ for which $x,y \in N_{G}(u)$.
As $xy \in E(\overline{G})$, we have $xy \notin E(G)$.
We show that there is no path between $x$ and $y$ that avoids any neighbor of $u$ in $G$.
If there is a path between $x$ and $y$ then it contains the vertex $a$ and it has the form $\mypath{x,a,y}$ in $G$.
Then, however, notice that $ya \in E(G)$ but $xa \notin E(G)$ by the induced $P_3=\mypath{a,x,y}$ in $\overline{G}$.
Thus $u$ is non-avoidable in $G$, because of $x$ and $y$.
Therefore, every avoidable vertex of $T$ is non-avoidable in $G$, since the set of leaves in $T$ are exactly the set of avoidable vertices in $\overline{G}$.
\end{proof}

Based on the previous results, we can reduce our problem to a graph $G$ that is both connected and co-connected and neither $G$ nor $\overline{G}$ are isomorphic to trees.
To achieve this in linear time we apply known techniques that avoid computing explicitly the complement of $G$, since we are mainly interested in recursively detecting the components and co-components of $G$.
Such a decomposition, known as the \emph{modular decomposition}, can be represented by a tree structure, denoted by $T(G)$, of $O(n)$ size and can be computed in linear time \cite{MCSP99,TedderCHP08}.
More precisely, the leaves of $T(G)$ correspond to the vertices of $G$ and every internal node $w$ of $T(G)$ is labeled with three distinct types according to whether the subgraph of $G$ induced by the leaves of the subtree rooted at $w$ is (i) not connected, or (ii) not co-connected, or (iii) connected and co-connected.
Moreover the connected components and the co-components of types (i) and (ii), respectively, correspond to the children of $w$ in $T(G)$.
Let $\mathcal{G}$ be a collection of maximal vertex-disjoint induced subgraphs of $G$ that are both connected and co-connected.
Then $T(G)$ determines all graphs of $\mathcal{G}$ in linear time.
Observe that if $\mathcal{G}$ is empty, then $G$ is a cograph.
%
%
In addition, we call $\mathcal{G}$, \emph{typical collection} of $G$ if for each graph $H \in \mathcal{G}$:
\begin{itemize}
\item $H$ is connected and co-connected,
\item $|V(H)| \leq |E(H)|$, $|V(H)| \leq |E(\overline{H})|$, and
\item every avoidable vertex in $H$ is an avoidable vertex in $G$.
\end{itemize}
The results of this section deduce the following algorithm.

\begin{theorem}\label{theo:mdavoid}
Let $G$ be a graph and let $A(G)$ be the set of avoidable vertices in $G$.
There is a linear-time algorithm, that
\begin{itemize}
\item computes a typical collection $\mathcal{G}$ of maximal vertex-disjoint induced subgraphs of $G$ and
\item for every vertex $v \in V(G) \setminus V(\mathcal{G})$, decides if $v \in A(G)$.
\end{itemize}
\end{theorem}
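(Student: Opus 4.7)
The plan is to use the modular decomposition tree $T(G)$, which can be built in linear time \cite{MCSP99,TedderCHP08}, as the backbone of the algorithm. From $T(G)$ we read off the collection $\mathcal{G}$ of maximal vertex-disjoint connected-and-co-connected induced subgraphs of $G$ directly: these correspond to the topmost internal nodes of $T(G)$ of the connected-and-co-connected type as we descend from the root. For every vertex $v \notin V(\mathcal{G})$, the vertex lies on an ``outside'' chain of disconnected-type and co-disconnected-type ancestors; the avoidability of $v$ is decided by walking this chain using Observation~\ref{obs:connected} together with Lemmas~\ref{lem:disconne} and~\ref{lem:universal}, and refined in the tree / co-tree special cases by Observation~\ref{obs:connected} and Lemma~\ref{lem:compltree}.

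More concretely, I would process $T(G)$ top-down. At an internal node $w$ of disconnected type, Observation~\ref{obs:connected} lets us treat each child's subtree independently. At a co-disconnected node $w$, any child that is a single leaf is a universal vertex of the current subgraph $H_w$; Lemma~\ref{lem:universal} then says that such a vertex is avoidable in $G$ if and only if $H_w$ is a clique, which is a local condition easily precomputed in one bottom-up pass of $T(G)$. For the remaining children, Lemma~\ref{lem:disconne} reduces the avoidability problem at this node to the avoidability problems at the children, so we simply recurse. At the first connected-and-co-connected node $w$ we meet on a root-to-leaf walk, the induced subgraph $H_w$ is a candidate for $\mathcal{G}$, provided it passes the final tree tests: if $|V(H_w)| > |E(H_w)|$ then $H_w$ is a tree and the avoidable vertices of $H_w$ in $G$ are its leaves (by Observation~\ref{obs:connected} applied through the chain above); if $|V(H_w)| > |E(\overline{H_w})|$ then $\overline{H_w}$ is a tree and the avoidable vertices of $H_w$ in $G$ are the non-leaves of $\overline{H_w}$ (by Lemma~\ref{lem:compltree} combined with the chain); otherwise $H_w$ satisfies all three conditions of a typical collection and joins $\mathcal{G}$.

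The correctness of the third defining property of a typical collection, namely that every avoidable vertex of $H_w$ is avoidable in $G$, follows by induction along the root-to-$w$ path: every disconnected-type ancestor preserves avoidability by Observation~\ref{obs:connected}, while every co-disconnected-type ancestor preserves it by Lemma~\ref{lem:disconne} after we have already accounted for the universal vertices at that level via Lemma~\ref{lem:universal}. The main obstacle is the accounting needed to make the sweep run in linear total time: the sizes $|V(H_w)|$, $|E(H_w)|$, and $|E(\overline{H_w})|=\binom{|V(H_w)|}{2}-|E(H_w)|$ must be available at each candidate node, but since the subgraphs in $\mathcal{G}$ are vertex-disjoint the sums $\sum |V(H_w)|$ and $\sum |E(H_w)|$ are both $O(n+m)$, and the clique/tree tests reduce to $O(1)$ arithmetic once these counters have been propagated up $T(G)$ in a preliminary bottom-up traversal.
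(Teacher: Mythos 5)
Your proposal is correct and takes essentially the same route as the paper: a top-down traversal of the linear-time modular decomposition tree, resolving disconnected nodes via Observation~\ref{obs:connected}, co-disconnected nodes via Lemmas~\ref{lem:disconne} and~\ref{lem:universal}, and the tree/co-tree special cases via Observation~\ref{obs:connected} and Lemma~\ref{lem:compltree}, with the remaining topmost connected and co-connected subgraphs forming $\mathcal{G}$. Your linear-time accounting (propagating vertex/edge counters so each test at a node is $O(1)$ or proportional to the subgraph's size) matches the paper's argument as well.
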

\begin{proof}
We first compute $T(G)$ in linear time \cite{MCSP99,TedderCHP08}.
Then we visit all nodes of $T(G)$ starting from the root and move towards the leaves of $T(G)$.
We stop each branch when we reach either a leaf for which we include it in $A(G)$, or when we reach a graph of $\mathcal{G}$.
Given a node $w$ of $T(G)$, let $G_w$ be the graph induced by the leaves of the subtree rooted at $w$.
At each node of $T(G)$ we perform the following steps.
\begin{enumerate}
\item If $G_w$ is disconnected then consider the connected components $C_1, \cdots, C_k$ of $G$ by Observation~\ref{obs:connected}.
That is, $A(G_w) = A(C_1) \cup \cdots \cup A(C_k)$.
\item If $\overline{G_w}$ is disconnected then consider the co-components $\overline{C}_1, \ldots, \overline{C}_k$ of $G$ such that $|\overline{C}_{i}| \geq 2$, for each $1\leq i \leq k$.
\begin{enumerate}
\item If $G_w=K_n$ (that is, $k=0$) then $A(G_w)=V(G_w)$.
\item Otherwise, $A(G_w) = A(\overline{C}_1) \cup \cdots \cup A(\overline{C}_k)$ by Lemma~\ref{lem:disconne}. Observe that all universal vertices in $G_w$ (that is, $|\overline{C}_{i}| = 1$) have been disregarded by Lemma~\ref{lem:universal}.
\end{enumerate}
\item Handling connected and co-connected graphs:
\begin{enumerate}
\item If $G_w=T$ then $A(G_w)=$ the set of leaves in $T$ by Observation~\ref{obs:connected}.
\item If $\overline{G_w}=T$ then $A(G_w)=$ the set of non-leaves in $T$ by Lemma~\ref{lem:compltree}.
\item Otherwise, include $G_w$ in the collection $\mathcal{G}$.
\end{enumerate}
\end{enumerate}
All steps can be carried out in $O(n+m)$ time by checking the type of the internal node $w$ in $T(G)$ and assigning the components and the co-components with the subtrees of $w$'s children.
Testing the corresponding cases whenever $G_w$ is connected and co-connected can be done by looking at the number of edges of $G_w$, that is in time $O(|V(G_w)|+|E(G_w)|)$.
Therefore the algorithm outputs in $O(n+m)$ time the described collection $\mathcal{G}$ and the set $A(G) \setminus A(\mathcal{G})$.
\end{proof}
%

\section{Computing Avoidable Vertices Directly from $G$}\label{sec:mintriang}
Here we give two different approaches for computing all avoidable vertices of a given graph $G$.
Both of them deal with the input graph itself without shrinking any unnecessary information, as opposed to the algorithms given in forthcoming sections.
Our first algorithm makes use of notions related to minimal triangulations of $G$ and runs in time $O(n^2 m)$.
The second algorithm runs in time $O(n^2 + m^2)$ and is based on a modified, traditional breadth-first search algorithm.

Let us first explain our algorithm through a minimal triangulation of $G$.
We first need some necessary definitions.
A graph is \emph{chordal} if it does not contain an induced cycle of length more than three.
In different terminology, $G$ is chordal if and only if $G$ is $(C_4, C_5, \ldots)$-free graph.

A graph $H=(V, E \cup F)$ is a \emph{minimal triangulation} of $G=(V,E)$ if $H$ is chordal and for every $F' \subset F$, the graph $(V,E \cup F')$ is not chordal.
The edges of $F$ in $H$ are called \emph{fill edges}.
Several $O(nm)$-time algorithms exist for computing a minimal triangulation \cite{Berry99,BerryBHP04,Heggernes06,RoseTL76}.
In connection with avoidable vertices, Beisegel et al. \cite{BeisegelCGMS19} showed the following characterization.

\begin{theorem}[\cite{BeisegelCGMS19}]\label{theo:charminBeisegel}
Let $u$ be a vertex of $G$. Then $u$ is avoidable in $G$ if and only if $u$ is a simplicial vertex in some minimal triangulation of $G$.
\end{theorem}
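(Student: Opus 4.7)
The plan is to prove the two directions separately. For the forward direction, I will use that removing any fill edge from a minimal triangulation creates a chordless cycle, and then lift the resulting cycle from $H$ back to an induced cycle in $G$. For the backward direction, I will start with a minimal triangulation of $G - u$ and show that completing $N_G(u)$ into a clique preserves both chordality and minimality while making $u$ simplicial.

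For the forward direction, assume $u$ is simplicial in a minimal triangulation $H$ of $G$, and take $x, y \in N_G(u)$ with $xy \notin E(G)$. Since $N_H(u)$ is a clique in $H$, we have $xy \in E(H)$, so $xy$ is a fill edge. By minimality, there is a chordless cycle $C$ in $H \setminus xy$ of length at least four, of which $xy$ is the unique chord in $H$. A short case analysis using that $N_H[u]$ is a clique shows that every internal vertex of the $x$-$y$ arc of $C$ lies outside $N_H(u) \supseteq N_G(u)$: an inner vertex adjacent to $u$ in $H$ would, via the clique property, force a chord other than $xy$. In particular either $u \notin V(C)$, or else $u$'s neighbors on $C$ are exactly $x$ and $y$ and the complementary arc still avoids $N_G[u] \setminus \{x,y\}$. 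To convert the $H$-arc of $C$ from $x$ to $y$ into an induced $x$-$y$-path in $G$, I invoke the standard replacement lemma for minimal triangulations: each fill edge on the arc can be substituted by an induced $G$-path whose internal vertices are non-adjacent in $H$ to the endpoints of the fill edge, hence non-adjacent to $u$. Iterating yields an induced $x$-$y$-path in $G$ avoiding $N_G[u] \setminus \{x,y\}$, which is closed by $u$ into an induced cycle through the $P_3$ $\langle x,u,y\rangle$.

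For the backward direction, assume $u$ is avoidable. Compute any minimal triangulation $H_0$ of $G - u$, and let $H$ be obtained from $H_0$ by re-adding $u$, all edges $uv$ with $v \in N_G(u)$, and all missing edges between pairs in $N_G(u)$; then $u$ is simplicial in $H$ by construction. Chordality of $H$ is immediate: any chordless cycle through $u$ has length three because $N_H(u)$ is a clique, and any chordless cycle avoiding $u$ would already live in the chordal graph $H_0$. For minimality, fill edges disjoint from $u$ inherit necessity from $H_0$. The delicate case is an added edge $xy$ with $x, y \in N_G(u)$ and $xy \notin E(G)$: because $u$ is avoidable, there is an induced $x$-$y$-path $P$ in $G$ whose internal vertices lie outside $N_G[u]$, and one can refine $P$ to be induced also in $H \setminus xy$ by iteratively replacing it with a shortest $x$-$y$-path in $H \setminus xy$ that still avoids $N_G[u] \setminus \{x,y\}$. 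Closing $P$ through $u$ then produces a chordless cycle of length at least four in $H \setminus xy$, so $xy$ is indeed non-removable, establishing that $H$ is a minimal triangulation of $G$.

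The main obstacle is the lifting step, shared by both directions: passing from an $H$-structure to an induced $G$-structure while keeping strict control over which vertices are touched, so that nothing from $N_G[u] \setminus \{x,y\}$ slips in. The cleanest route is the Parra–Scheffler characterisation identifying the fill edges of minimal triangulations with pairs of vertices contained in a common minimal separator of $G$; a minimal separator comes equipped with the two full components on either side of it, and these supply the induced paths one needs for both the replacement lemma in the forward direction and the shortest-path refinement in the backward direction.
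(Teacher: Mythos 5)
Your backward direction (avoidable $\Rightarrow$ simplicial in some minimal triangulation) breaks at the chordality claim. Take $G=C_5$ with vertices in cyclic order $u,p,a,b,q$; here $u$ is avoidable. Then $G-u$ is the path $\mypath{p,a,b,q}$, which is already chordal, so $H_0=G-u$, and your $H$ is $C_5$ plus the single chord $pq$. The cycle $\mypath{p,a,b,q}$ is then a chordless $C_4$ in $H$, so $H$ is not even a triangulation of $G$. The flaw is the sentence ``any chordless cycle avoiding $u$ would already live in the chordal graph $H_0$'': such a cycle may use the clique-completion edges inside $N_G(u)$, which are not edges of $H_0$, so chordality of $H_0$ gives nothing. (Your minimality argument has the same blind spot, but it never gets off the ground because $H$ need not be chordal.) The sound way to get a triangulation in which $u$ receives no fill and then exploit avoidability is exactly the machinery this paper builds for the direction it actually uses: a vertex-incremental minimal triangulation with ordering $(A,u,X)$, $A=V\setminus N_G[u]$, $X=N_G(u)$, has no fill edge at $u$ (Lemmas~\ref{lem:increm} and~\ref{lem:char_simpl_exist}), and the argument of Lemma~\ref{lem:char_simpl} then shows that avoidability of $u$ forces $N_G(u)$ to be a clique in it. Note also that the paper does not reprove Theorem~\ref{theo:charminBeisegel} at all; it imports it from \cite{BeisegelCGMS19}, so there is no in-paper proof to compare against, but Lemmas~\ref{lem:char_simpl}--\ref{lem:char_simpl_exist} are the correct version of the construction you attempted by hand.

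In the forward direction the overall plan can be salvaged, but not with the justification you give. First, the arc analysis is incomplete: an arc $\mypath{x,w,y}$ with a single internal vertex $w\in N_H(u)$, $w\neq u$, produces no forbidden chord, so the correct statement is only that \emph{some} arc avoids $N_H[u]$ internally. More seriously, the ``standard replacement lemma'' as you state it is false and its consequence does not follow: any path in $G$ substituting a fill edge $ab$ necessarily has internal vertices adjacent to $a$ and to $b$, and even non-adjacency to $a,b$ would say nothing about non-adjacency to $u$ when $a,b\notin N_H(u)$. The Parra--Scheffler route you name in your last paragraph is the right fix, and it makes the chordless-cycle/arc detour unnecessary: $xy$ is a fill edge, hence $x,y$ lie in a common minimal separator $S$ of $H$ (which is also a minimal separator of $G$); since $u$ is simplicial in $H$ it lies in no minimal separator of $H$ (a vertex of a minimal separator has neighbors in two full components, and a clique neighborhood would merge them), so $u\notin S$; choosing a full component $C$ of $G-S$ different from the component containing $u$ gives a path from $x$ to $y$ whose internal vertices all lie in $C$, hence outside $N_G[u]$, and this closes through $u$ into an induced cycle. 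That use of simpliciality of $u$ (to keep $u$ out of $S$) is the step your write-up is missing.
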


Although such a characterization is complete, it does not lead to an efficient algorithm for deciding whether a given vertex is avoidable,
since one has to produce \emph{all} possible minimal triangulations of $G$.
Here we strengthen such a characterization in the sense that it provides an efficient recognition based on a particular, \emph{nice}, minimal triangulation of $G$.

\begin{lemma}\label{lem:char_simpl}
Let $u$ be a vertex of a graph $G=(V,E)$ and let $H=(V,E\cup F)$ be a minimal triangulation of $G$ such that $u$ is not incident to any edge of $F$.
Then $u$ is avoidable in $G$ if and only if $u$ is simplicial in $H$.
\end{lemma}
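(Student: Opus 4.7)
Forward direction ($\Leftarrow$) is immediate from Theorem~\ref{theo:charminBeisegel}: if $u$ is simplicial in the minimal triangulation $H$, then $u$ is avoidable in $G$, and no use is made of the hypothesis that no fill edge touches $u$.

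For the hard direction ($\Rightarrow$), suppose $u$ is avoidable in $G$, and argue by contradiction that $u$ is not simplicial in $H$. The hypothesis that no fill edge is incident to $u$ yields $N_H(u) = N_G(u)$, so failure of simpliciality produces $x, y \in N_G(u)$ with $xy \notin E(H)$ (and hence $xy \notin E(G)$). The plan is to extract from this pair an $(x,y)$-separator contained in $N_G[u]\setminus\{x,y\}$, which by Observation~\ref{obs:separtor} will contradict the avoidability of $u$.

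The construction takes place inside $H$: pick any minimal $(x,y)$-separator $S$ in $H$. Two standard ingredients drive the argument: (i) $H$ is chordal, so minimal separators of $H$ are cliques, hence $S$ is a clique in $H$; and (ii) because $E(G)\subseteq E(H)$, the graph $G - S$ is a spanning subgraph of $H - S$, so $S$ is automatically an $(x,y)$-separator of $G$ as well. Note that I only need $S$ to remain a separator in $G$, not the stronger (but true) fact that minimal separators of a minimal triangulation remain minimal in $G$.

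Next I locate $u$ inside $S$. Since $u$ is adjacent to both $x$ and $y$ in $H$, if $u \notin S$ then $\mypath{x,u,y}$ would survive in $H - S$, contradicting that $S$ separates $x$ from $y$. Hence $u \in S$, and the clique property of $S$ forces every other vertex of $S$ into $N_H(u) = N_G(u)$; combined with $x, y \notin S$ this gives $S \subseteq N_G[u]\setminus\{x,y\}$. The superset $N_G[u]\setminus\{x,y\}$ therefore also separates $x$ from $y$ in $G$, contradicting the avoidability of $u$ via Observation~\ref{obs:separtor}. The only subtlety is verifying that the hypothesis ``no fill edge touches $u$'' is precisely what is needed to squeeze $S$ into $N_G[u]$; beyond that the argument is a straightforward exploitation of chordality together with the monotone behaviour of separators under edge-subset.
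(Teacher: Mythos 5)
Your proof is correct, but it takes a different route from the paper's for the hard direction. Where you assume $u$ is avoidable and non-simplicial, pick a minimal $(x,y)$-separator $S$ of the chordal graph $H$, and invoke Dirac's classical theorem that minimal separators of chordal graphs are cliques to force $u\in S$ and $S\subseteq N_G[u]\setminus\{x,y\}$ (the no-fill-edge hypothesis entering exactly where you say, via $N_H(u)=N_G(u)$), the paper argues in the contrapositive without any separator machinery: it supposes a path $P$ in $G$ between $x$ and $y$ whose internal vertices avoid $N_G[u]$, observes that this path survives in $H[(V\setminus N_G[u])\cup\{x,y\}]$, and closes a shortest such path through $u$ into an induced cycle of length at least four in $H$, contradicting chordality directly. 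The two arguments hinge on the same two facts (chordality of $H$ and the absence of fill edges at $u$), but yours buys brevity by outsourcing the chordality work to the clique-separator theorem, and it cleanly records the weaker fact you actually need (that a separator of $H$ remains a separator of $G$ under edge deletion); the paper's version is self-contained within the manuscript and makes the role of the no-fill-edge condition visible in the cycle construction itself. Your conclusion via Observation~\ref{obs:separtor} is legitimate since your separator contains $u$, excludes $x,y$, and is a proper subset of $N_G[u]$; alternatively you could conclude straight from Definition~\ref{def:avoidablevertex}. The easy direction is handled identically in both proofs via Theorem~\ref{theo:charminBeisegel}.
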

\begin{proof}
If $u$ is simplicial in $H$ then by Theorem~\ref{theo:charminBeisegel} we deduce that $u$ is avoidable in $G$.
Suppose that $u$ is non-simplicial in $H$. Then there are two vertices $x,y \in N_G(u)$ that are non-adjacent in $H$.
Since $G$ is a subgraph of $H$, we have $xy \notin E(G)$.
We claim that there is no path in $G$ between $x$ and $y$ that avoids any vertex of $N_G[u] \setminus \{x,y\}$.
Assume for contradiction that there is such a path $P$. Then $V(P) \setminus \{x,y\}$ is non-empty and contains vertices only from $V \setminus N[u]$.
This means that $x, y$ belong to the same connected component of $H$ induced by $(V \setminus N[u]) \cup \{x,y\}$.
As $u$ is non-adjacent to any vertex of $V \setminus N[u]$ in $H$, the vertices of $(V \setminus N[u]) \cup \{x,y,u\}$ induce an induced cycle of length at least four in $H$.
Then we reach a contradiction to the chordality of $H$.
Therefore, there is no such path between $x$ and $y$, which implies that $u$ is non-avoidable in $G$.
\end{proof}

Next we show that such a minimal triangulation with respect to $u$, always exists and can be computed in $O(nm)$ time.
Our approach for computing a nice minimal triangulation of $G$ is \emph{vertex incremental},
in the following sense. We take the vertices of $G$ one by one in an arbitrary order $(v_1, \ldots, v_n)$,
and at step $i$ we compute a minimal triangulation $H_i$ of $G_i = G[\{v_1, \ldots, v_i\}]$
from a minimal triangulation $H_{i-1}$ of $G_{i-1}$ by adding only edges incident to $v_i$.
This is possible thanks to the following result.

\begin{lemma}[\cite{BerryHV06}]\label{lem:increm}
Let $G$ be an arbitrary graph and let $H$ be a minimal triangulation of $G$.
Consider a new graph $G' = G + v$, obtained by adding to $G$ a new vertex $v$.
There is a minimal triangulation $H'$ of $G'$ such that $H' - v = H$.
\end{lemma}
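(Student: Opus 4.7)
The plan is to extend $H$ by $v$ together with its $G'$-neighborhood, triangulate the result using only edges incident to $v$, and then minimize among the $v$-incident fill edges; minimality of the inherited fill edges of $H$ will come for free from the chordality of $H$. The guiding observation is that any induced cycle of length at least four in a graph of the form $H + v$ must pass through $v$, since otherwise it would be an induced cycle of the chordal graph $H$.

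First I would form $H^{\ast}$ by adding $v$ to $H$ and joining $v$ to every vertex of $V(G)$. Then $v$ is universal in $H^{\ast}$, so every cycle of length at least four through $v$ has a chord incident to $v$; combined with the chordality of $H^{\ast} - v = H$, this shows $H^{\ast}$ is a chordal supergraph of $G'$. Next I would iteratively delete, one at a time, fill edges $vw \notin E(G')$ whose removal preserves chordality, stopping when no such edge remains, and let $H'$ denote the resulting graph. By construction $H' - v = H$, and no fill edge of $H'$ incident to $v$ can be removed without destroying chordality.

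Finally, I would argue that $H'$ is actually a minimal triangulation of $G'$. Let $e$ be any fill edge of $H'$ with respect to $G'$. If $e$ is incident to $v$, the preceding step guarantees that $H' \setminus e$ is non-chordal. If $e$ is not incident to $v$, then $e$ is a fill edge of $H$ with respect to $G$, so by minimality of $H$ there is an induced cycle $C$ of length at least four in $H \setminus e$. Since $V(C) \subseteq V(G)$ and $v \notin V(C)$, the only potential chords of $C$ in $H' \setminus e$ are those already in $H \setminus e$, because any edge incident to $v$ cannot serve as a chord of a cycle avoiding $v$. Hence $C$ remains an induced cycle of length at least four in $H' \setminus e$, contradicting the chordality that removability of $e$ would require.

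The main subtlety is the interplay between the two minimization phases: one might worry that trimming $v$-incident fill edges interferes with the minimality inherited from $H$. What rescues the argument is the asymmetry that $v \notin V(H)$, so $v$-incident fill edges can neutralize induced cycles through $v$ but are powerless against induced cycles lying entirely in $V(H)$. That is precisely what lets us transfer the minimality of $H$ to the non-$v$-incident fill edges of $H'$.
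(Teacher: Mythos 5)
The paper itself offers no proof of Lemma~\ref{lem:increm} --- it is imported verbatim from \cite{BerryHV06} --- so there is no in-paper argument to match; I can only assess your construction on its own. Its skeleton is sound: adding $v$ as a universal vertex to the chordal graph $H$ gives a chordal supergraph $H^{\ast}$ of $G'$ with $H^{\ast}-v=H$; greedily deleting $v$-incident fill edges preserves chordality, containment of $G'$, and the property $H'-v=H$; and your cycle-transfer argument correctly shows that every fill edge of $H'$ \emph{not} incident to $v$ is also non-removable, since a chordless cycle of $H\setminus e$ witnessing the minimality of $H$ has all its vertices in $V(G)$ and hence acquires no chords in $H'\setminus e$.

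The one genuine gap is the final inference. You establish that no \emph{single} fill edge of $H'$ can be deleted without destroying chordality, but the definition of a minimal triangulation used in this paper quantifies over \emph{every} proper subset $F'\subset F$ of the fill edges, and deleting several fill edges simultaneously could in principle destroy the very cycles that witness the non-removability of each edge individually (some of those cycle edges may themselves be fill edges). The equivalence ``no single fill edge is removable $\Leftrightarrow$ minimal triangulation'' is a theorem of Rose, Tarjan and Lueker \cite{RoseTL76}, not a consequence of the definition, so you must invoke it explicitly to close the argument. Alternatively, you can bypass it: among all chordal graphs $M$ with $G'\subseteq M\subseteq H^{\ast}$, choose $H'$ with an inclusion-minimal edge set; then $H'$ is a minimal triangulation of $G'$ directly from the definition (any chordal graph obtained from $H'$ by dropping fill edges is again sandwiched, contradicting the choice of $H'$), and since $G\subseteq H'-v\subseteq H^{\ast}-v=H$ with $H'-v$ chordal, the minimality of $H$ forces $H'-v=H$. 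This variant needs neither the greedy process nor the Rose--Tarjan--Lueker characterization and is essentially the argument of \cite{BerryHV06}.
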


We denote by $H(v_1, \ldots, v_n)$ a vertex incremental minimal triangulation of $G$ which is obtained by considering the vertex ordering $(v_1, \ldots, v_n)$ of $G$.
Computing such a minimal triangulation of $G$, based on any vertex ordering, can be done in $O(nm)$ time \cite{BerryHV06}.

\begin{lemma}\label{lem:char_simpl_exist}
Let $u$ be a vertex of $G$ and let $X = N_{G}(u)$ and $A=V(G) \setminus N_{G}[u]$.
In any vertex incremental minimal triangulation $H(A,u,X)$ of $G$, no fill edge is incident to $u$.
\end{lemma}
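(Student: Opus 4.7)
The plan is to exploit directly the incremental construction of $H(A,u,X)$ using Lemma~\ref{lem:increm} and the fact that $u$ has no neighbors in $A$. Let $H_i$ denote the minimal triangulation of $G_i$ produced at step $i$, where $G_i$ is the subgraph induced by the first $i$ vertices in the ordering $(A,u,X)$. By Lemma~\ref{lem:increm}, at each step we may choose $H_i$ so that $H_i - v_i = H_{i-1}$, which means every fill edge introduced at step $i$ is incident to $v_i$. Hence all fill edges of the final triangulation $H = H_n$ are incident to the vertex that was being inserted at the moment they were added; this is the only fact about the construction I will use.

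Next I would isolate the step in which $u$ itself is inserted. At that step, the current graph is $G[A \cup \{u\}]$; since $A = V(G) \setminus N_G[u]$, the vertex $u$ has no neighbor in $A$, so $u$ is isolated in $G[A \cup \{u\}]$. Any minimal triangulation of this graph is obtained from $H_{|A|}$ simply by adding $u$ as an isolated vertex, so no fill edge incident to $u$ is introduced at this step. The fill edges added strictly before this step lie entirely inside $A$, hence are also not incident to $u$.

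It remains to check the steps corresponding to the insertions of the vertices of $X$. When some $v \in X$ is added, the incremental property guarantees that any newly created fill edge has the form $vw$ for some vertex $w$ already present in the triangulation. Such an edge is incident to $u$ only if $w = u$, but then $vw = vu$ is an edge of $G$ (because $v \in X = N_G(u)$) and therefore cannot be a fill edge. Consequently no step of the incremental construction introduces a fill edge incident to $u$, which proves the claim.

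There is no real obstacle here: the only subtlety is making sure that the minimal triangulation $H(A,u,X)$ referred to in the statement is precisely the one produced by iterating Lemma~\ref{lem:increm} along the given ordering, so that the "only edges incident to the new vertex" property can be invoked at every step. Once this is fixed, the argument reduces to the observation that $u$ is isolated at the moment of its insertion and that all later insertions involve genuine neighbors of $u$ in $G$.
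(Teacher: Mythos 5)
Your proof is correct and follows essentially the same route as the paper: it uses the vertex-incremental property (fill edges at each step are incident only to the newly inserted vertex), the fact that $u$ is isolated in $G[A\cup\{u\}]$ so its insertion requires no fill edge, and that all vertices inserted after $u$ lie in $N_G(u)$, so no later fill edge can be incident to $u$. Your version just makes the step-by-step bookkeeping slightly more explicit than the paper's phrasing.
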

\begin{proof}
Let $H(A,u,X) = (V, E \cup F)$ be a vertex incremental minimal triangulation of $G=(V,E)$.
Consider the vertex ordering $(A,u,X)$.
Observe that when adding $u$ to $H[A]$ no fill edge is required, as the considered graph $H[A] + u$ is already chordal.
Moreover $u$ is adjacent in $G$ to every vertex appearing after $u$ in the described ordering $(A,u,X)$.
Thus $u$ is non-adjacent to any vertex of $A$ in $H(A,u,X)$ which means that no edge of $F$ is incident to $u$.
\end{proof}

A direct consequence of Lemmas~\ref{lem:char_simpl} and \ref{lem:char_simpl_exist} is an $O(nm)$-time recognition algorithm for deciding whether a given vertex $u$ is avoidable.
For every vertex $u$, we first construct a vertex incremental minimal triangulation $H(A,u,X)$ of $G$ by applying the $O(nm)$-time algorithm given in \cite{BerryHV06}.
Then we simply check whether $u$ is simplicial in the chordal graph $H(A,u,X)$ by Lemma~\ref{lem:char_simpl}, which means that the overall running time is $O(nm)$.

\begin{algorithm}[t]
\SetKwInOut{KwIn}{Input}
\SetKwInOut{KwOut}{Output}
\KwIn{A graph $G$, a minimal triangulation $H$ of $G$, and a vertex $u$}
\KwOut{Returns true iff $u$ is avoidable in $G$}
Let $X = N_{G}(u)$ and $A=V(G) \setminus N_{G}[u]$\;  
Initialize a new graph $H' = H[A \cup \{u\}]$\;
Add the vertices of $X$ in $H'$ in an arbitrary order and maintain a minimal triangulation $H'$ of $G$ by applying the $O(nm)$-time algorithm given in \cite{BerryHV06}\;
\eIf{$u$ is simplicial in $H'$}
{\KwRet{true}\;}
{\KwRet{false}\;}
\caption{Testing if $u$ is avoidable with a vertex incremental minimal triangulation}
\label{algo:MTD}
\end{algorithm}

We note that one may compute any minimal triangulation $H$ of $G$, as a preprocessing step in time $O(nm)$, and
then use $H$ for constructing the vertex incremental minimal triangulation at each vertex $u$, so that $H[A]$ is already computed for $A=V(G)\setminus N_{G}[u]$.
Although such an approach results within the same theoretical time complexity, in practice it avoids recomputing common parts of the input data.
We give the details in Algorithm~\ref{algo:MTD} and, as already explained, its running time is $O(nm)$.
By applying Algorithm~\ref{algo:MTD} on each vertex, we obtain the following result.

\begin{theorem}\label{theo:algoMTD}
Listing all avoidable vertices of $G$ by using Algorithm~\ref{algo:MTD} takes $O(n^2 m)$ time.
\end{theorem}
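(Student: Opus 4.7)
My plan is to loop Algorithm~\ref{algo:MTD} over all $n$ vertices of $G$ and bound the cost of each invocation. Correctness of a single call follows from Lemmas~\ref{lem:char_simpl_exist} and \ref{lem:char_simpl}: once we certify that the chordal graph $H'$ produced at Step~3 coincides with a vertex incremental minimal triangulation of $G$ whose ordering prefix is $(A,u)$, Lemma~\ref{lem:char_simpl_exist} guarantees that $u$ is incident to no fill edge, and Lemma~\ref{lem:char_simpl} then reduces the avoidability test for $u$ in $G$ to the simpliciality check already performed by Algorithm~\ref{algo:MTD}.

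For the running time I would first charge the one-off preprocessing cost of producing an arbitrary minimal triangulation $H$ of $G$, which is $O(nm)$ by \cite{BerryHV06}. For a fixed vertex $u$, the restriction $H'\leftarrow H[A\cup\{u\}]$ is computed in linear time and is itself a minimal triangulation of $G[A\cup\{u\}]$, so the incremental scheme of \cite{BerryHV06} may be resumed from $H'$ rather than started from scratch. Inserting the vertices of $X$ one at a time using that scheme costs $O(nm)$ in total, after which scanning the at most $n-1$ neighbors of $u$ in $H'$ suffices to decide simpliciality in $O(n^2)$ time. Hence a single call of Algorithm~\ref{algo:MTD} runs in $O(nm)$ time, and summing over the $n$ choices of $u$ gives the claimed $O(n^2m)$ bound.

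The only delicate point I anticipate is the justification that restricting a minimal triangulation of $G$ to an induced subgraph yields a minimal triangulation of that subgraph, which is what allows the incremental procedure to be resumed from $H[A\cup\{u\}]$ instead of being rebuilt from scratch for every $u$. This is a classical property of minimal triangulations and requires only citing the appropriate fact; once it is in hand, the running time argument is a direct tally of the costs already accounted for in \cite{BerryHV06}, and the correctness follows by matching Step~3 of Algorithm~\ref{algo:MTD} with the hypothesis of Lemma~\ref{lem:char_simpl_exist} for the ordering $(A,u,X)$.
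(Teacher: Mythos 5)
Your overall accounting (one $O(nm)$ call per vertex, $n$ calls, correctness via Lemmas~\ref{lem:char_simpl} and~\ref{lem:char_simpl_exist}) is the paper's, but the step you dismiss as the ``only delicate point'' is a genuine gap: it is \emph{not} a classical property that restricting a minimal triangulation to an induced subgraph yields a minimal triangulation of that subgraph, and there is no fact to cite here because the statement is false. Concretely, let $G$ consist of a cycle $v_1v_2\ldots v_6$ together with a pendant vertex $u$ adjacent only to $v_1$, and let $H$ be the minimal triangulation of $G$ obtained by adding the chords $v_2v_6$, $v_2v_5$, $v_3v_5$ (each of these is the unique chord of a $C_4$ after its removal, so $H$ is indeed minimal). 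Here $A=V(G)\setminus N_G[u]=\{v_2,\ldots,v_6\}$ and $G[A\cup\{u\}]$ is a chordal graph (a $P_5$ plus an isolated vertex), yet $H[A\cup\{u\}]$ contains all three fill edges; so it is a non-minimal triangulation of $G[A\cup\{u\}]$, and the invariant needed to ``resume'' the incremental scheme of \cite{BerryHV06} -- that the current graph is a minimal triangulation of the current induced subgraph -- fails before any vertex of $X$ is inserted. The restriction can even retain fill edges incident to $u$ itself (take $G=C_5$ with $u=v_1$ and $H=C_5+\{v_1v_3,v_3v_5\}$; then $v_1v_3$ survives in $H[A\cup\{u\}]$), so the graph $H'$ produced at the end need not be a minimal triangulation of $G$ with no fill edge at $u$, and your identification of $H'$ with a vertex incremental minimal triangulation with ordering prefix $(A,u)$ is unjustified; both hypotheses of Lemma~\ref{lem:char_simpl} can fail.

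The theorem does not need the resumption trick. The paper's argument is the plain one: for each vertex $u$, build a vertex incremental minimal triangulation $H(A,u,X)$ of $G$ with the $O(nm)$-time algorithm of \cite{BerryHV06} (Algorithm~\ref{algo:MTD} with the precomputed $H$ is offered only as a practical shortcut within the same asymptotic bound); Lemma~\ref{lem:char_simpl_exist} then guarantees no fill edge is incident to $u$, Lemma~\ref{lem:char_simpl} reduces avoidability of $u$ to simpliciality in the resulting chordal graph, and summing $O(nm)$ over the $n$ vertices gives $O(n^2m)$. To repair your write-up, either fall back on this from-scratch construction per vertex, or supply an actual argument that the particular restriction you start from can serve as a valid starting point for the incremental procedure -- the appeal to a ``classical property'' cannot stand as is.
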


An interesting remark of such an approach is that we can list all avoidable vertices of a chordal graph $G$ in an efficient way.
We note that such a result can be obtained directly from the definition of an avoidable vertex which shows that a non-simplicial vertex of a chordal graph is non-avoidable.

\begin{cor}\label{cor:chordal}
Let $G$ be a chordal graph. Listing all avoidable vertices of $G$ can be done in $O(n^{\omega})$ time, where $O(n^{\omega})$ is the time required to multiply two $n \times n$ binary matrices.
\end{cor}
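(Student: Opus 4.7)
The plan is to reduce the claim to the analogous known bound for listing simplicial vertices by establishing that, inside the class of chordal graphs, the notions of avoidability and simpliciality coincide. Once that equivalence is in hand, the corollary is immediate from the $O(n^\omega)$-time algorithm for simpliciality cited in the preliminaries.

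My first step would be to show that every non-simplicial vertex $u$ of a chordal graph $G$ is non-avoidable. If $u$ is non-simplicial, pick two neighbors $x,y \in N_G(u)$ with $xy \notin E(G)$. For $u$ to be avoidable, the induced path $\langle x,u,y\rangle$ would have to lie in an induced cycle of $G$, and any such cycle has length at least four because $x$ and $y$ are non-adjacent. This contradicts chordality, so $u$ is non-avoidable. Combined with the standing observation that every simplicial vertex is avoidable, this yields the equivalence ``avoidable iff simplicial'' on chordal graphs.

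Alternatively, the same equivalence is already latent in Lemma~\ref{lem:char_simpl}: taking $H = G$ is a minimal triangulation of itself (with empty fill), and it satisfies the hypothesis that $u$ is incident to no fill edge for every vertex $u$. Hence $u$ is avoidable in $G$ if and only if $u$ is simplicial in $G$.

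Finally, I invoke the known $O(n^\omega)$-time procedure for listing all simplicial vertices of a graph (stated in the preliminaries and due to Kloks--Kratsch--M\"uller) applied to $G$ itself. The output, by the equivalence just proved, is exactly the set of avoidable vertices of $G$, yielding the claimed $O(n^\omega)$ running time. There is no real obstacle here; the only point that needs a brief justification is the forward direction of the equivalence, and that is where the chordality hypothesis is used in an essential way.
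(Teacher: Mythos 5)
Your proposal is correct and matches the paper: the alternative route you sketch (take $H=G$ as its own minimal triangulation with empty fill and apply Lemma~\ref{lem:char_simpl}) is exactly the paper's proof, and your primary direct argument (a non-simplicial vertex of a chordal graph gives an induced $P_3$ whose containing cycle would have length at least four, contradicting chordality) is precisely the remark the paper makes just before the corollary. Both correctly reduce the task to listing simplicial vertices via the $O(n^{\omega})$ algorithm of \cite{KloksKM00}.
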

\begin{proof}
By Lemma~\ref{lem:char_simpl} the set of simplicial vertices of $G$ is the set of avoidable vertices because any minimal triangulation $H$ of $G$ contains no fill edge, as $G$ is chordal.
Thus listing the avoidable vertices of a chordal graph $G$ reduces to listing the simplicial vertices of $G$.
Therefore detecting all avoidable vertices can be done in $O(n^{\omega})$ time by using the algorithm of \cite{KloksKM00}, which is the time needed to perform a fast matrix multiplication.
\end{proof}


\subsection{A fast algorithm for listing avoidable vertices}
Our second approach is based on the following notion of \emph{protecting} that we introduce here.
Given a set of vertices $S \subseteq V$, an $S$-excluded path is a path in which no internal vertex belongs to $S$.
Observe that an edge is an $S$-excluded path, for any choice of $S$.
By definition a single vertex is connected to itself by the trivial path.
Whenever there is an $S$-excluded path in $G$ between vertices $a$ and $b$, notice that $a$ can reach $b$ through vertices of $V(G) \setminus S$.

\begin{definition}[protecting]\label{def:protects}
Let $x$ and $y$ be two vertices of $G$.
We say that $x$ \emph{protects} $y$ if there is a $N_{G}[y]$-excluded path between $x$ and every vertex of $N_{G}(y)$.
In other words, $x$ protects $y$ if for any $z \in N_{G}(y)\setminus\{x\}$, either $xz \in E(G)$ or $x$ can reach $z$ through vertices of $V(G) \setminus N_{G}[y]$.
\end{definition}

Let us explain how to check if $x$ protects $y$ in linear time, that is in $O(n+m)$ time.
We consider the graph $G' = G - y$ and run a slight modification of a breadth-first search algorithm on $G'$ starting from $x$.
In particular, we try to reach the vertices of $N_{G}(y) \setminus \{x\}$ (target set) from $x$ in $G'$.
Every time we encounter a vertex $v$ of the target set, we include $v$ in a set $T$ of discovered target vertices and we do not continue the search from $v$ by avoiding to place $v$ within the search queue.
Consequently, no vertex of the target set is a non-leaf node of the constructed search tree.
Algorithm~\ref{algo:protectBFS} shows in detail the considered modification of a breadth-first search.

\begin{algorithm}[t]
\SetKwInOut{KwIn}{Input}
\SetKwInOut{KwOut}{Output}
\KwIn{A graph $G$, a vertex $x$, and a target set $S \subseteq V(G)$}
\KwOut{Returns true iff there is an $S$-excluded path between $x$ and every vertex of $S$}
Initialize a queue $Q=\{x\}$ and set $T=\emptyset$\;  
Mark $x$\;
\While{$Q$ is not empty}
{
  $s = Q.pop()$\;
  \For{$v \in N(s)$}{
    \If{$v$ is unmarked} {
      \eIf{$v \in S$} {
      $T = T \cup \{v\}$\;
      } {
      $Q.add(v)$\;
      }
      Mark $v$\;
    }
  }
}
\KwRet{$T = S$}
\caption{Detecting whether there is an $S$-excluded path between $x$ and every vertex of $S$}
\label{algo:protectBFS}
\end{algorithm}

\begin{lemma}\label{lem:algoSpath}
Algorithm~\ref{algo:protectBFS} is correct and runs in $O(n+m)$ time.
\end{lemma}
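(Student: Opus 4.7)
The plan is to verify both correctness and the running time bound for Algorithm~\ref{algo:protectBFS}. The core idea is to argue that the modified BFS, which halts expansion whenever it hits a vertex of the target set $S$, discovers a vertex of $S$ precisely when that vertex is reachable from $x$ via an $S$-excluded path. Once this is established, the equality test $T = S$ at the end directly certifies the property claimed in the output specification.

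For correctness, I would maintain the following invariant throughout the execution: a vertex $w$ is marked if and only if $w = x$ or there exists an $S$-excluded path in $G$ from $x$ to $w$; moreover, $w$ is inserted into the queue $Q$ only if $w = x$ or $w \notin S$. The forward direction is immediate: if the algorithm returns \textbf{true}, then every $v \in S$ lies in $T$, and tracing back the BFS parent pointers from $v$ to $x$ yields a path whose internal vertices were all pushed to $Q$, hence all lie outside $S$; therefore this path is $S$-excluded. For the converse, I would induct on the length of a shortest $S$-excluded path from $x$ to a given target $v \in S$: all internal vertices of such a path lie in $V(G) \setminus S$ and admit strictly shorter $S$-excluded paths from $x$, so by induction they are placed in $Q$, eventually popped, and when the penultimate vertex is processed, its neighbor $v \in S$ is detected and placed in $T$. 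Thus $T = S$ and the algorithm returns \textbf{true}.

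The running-time analysis is the standard BFS amortization. The marking mechanism ensures that each vertex is inserted into $Q$ (or into $T$) at most once, and thus popped at most once. When a vertex $s$ is popped, the inner loop examines all of its neighbors, incurring $O(d_G(s))$ work. Summing yields $\sum_{s} O(d_G(s)) = O(n + m)$, with an additive $O(n)$ overhead for initialization and the final comparison $T = S$, realized by standard array-based representations of the marks and of the set $T$.

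The only delicate point is the invariant: one has to be careful that halting the search at vertices of $S$ does not miss any $S$-excluded reachability, because those vertices are never re-expanded. The inductive argument resolves this by observing that any $S$-excluded path from $x$ to a target $v \in S$ uses only non-$S$ internal vertices, which are the very vertices the algorithm does fully expand; so the path remains discoverable even though the algorithm truncates exploration at $S$. No additional data structures beyond a queue and a marking array are required.
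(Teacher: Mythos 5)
Your proof is correct and follows essentially the same route as the paper's: both argue that the modified BFS marks exactly the vertices reachable from $x$ by $S$-excluded paths (you via an invariant and induction on shortest $S$-excluded path length, the paper via the search tree and a BFS of $G-S$ with the targets attached as leaves), and both bound the running time by the standard BFS charging argument. The only nitpick is that your inductive hypothesis should be stated for all vertices admitting an $S$-excluded path from $x$ (i.e., exactly your invariant), not only for targets $v\in S$, since the internal vertices you appeal to lie outside $S$; this is a phrasing fix, not a gap.
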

\begin{proof}
For the correctness, let $T$ be the search tree discovered by the algorithm when the search starts from $x$.
Observe that the basic concepts of the breadth-first search are maintained, so that the key properties with the shortest paths between the vertices of $G$ and the search tree $T$ are preserved.
If there is a leaf vertex $v$ in the constructed tree $T$ such that $v \in S$ then the unique path in $T$ is an $S$-excluded path in $G$ between $x$ and $v$,
since no vertex of $S$ is a non-leaf vertex of $T$.
On the other hand, assume that there is an $S$-excluded path in $G$ between $x$ and every vertex of $S$.
For every $v \in S$, among such $S$-excluded paths between $x$ and $v$, choose $P(v)$ to be the shortest.
Let $p(v)$ be the neighbor of $v$ in $P(v)$.
Clearly $x$ and every vertex $p(v)$ belong to the same connected component of $G$.
Consider the graph $G - S$.
Notice that every vertex $p(v)$ belongs to the same connected component with $x$ in $G - S$,
since for otherwise some vertices of $S$ separate $x$ and a vertex $v$ of $S$ which implies that there is no $S$-excluded path in $G$ between $x$ and $v$ in $G$.
Now let $T_x$ be a breadth-first search tree of $G-S$ that contains $x$.
Then the distance between $x$ and $p(v)$ in $T_x$ corresponds to the length of their shortest path in $G-S$.
Construct $T$ by attaching every vertex $v$ of $S$ to be a neighbor of $p(v)$ in $T_x$.
Therefore $T$ is a tree that contains the shortest $S$-excluded paths between $x$ and the vertices of $S$.

Regarding the running time, notice that no additional data structure is required compared to the classical implementation of the breadth-first search.
Hence the running time of Algorithm~\ref{algo:protectBFS} is bounded by the breadth-first search algorithm which is $O(n+m)$.
\end{proof}

Therefore we can check whether $x$ protects $y$ by running Algorithm~\ref{algo:protectBFS} on the graph $G - y$ with target set $S=N_{G}(y) \setminus \{x\}$.
The connection to the avoidability of a vertex, can be seen with the following result.

\begin{lemma}\label{lem:protect}
Let $u$ be a vertex of a graph $G=(V,E)$. Then $u$ is avoidable in $G$ if and only if $x$ protects $u$ for every vertex $x \in N_{G}(u)$.
\begin{proof}
Suppose first that $u$ is avoidable. Consider a vertex $x \in N_{G}(u)$.
Then for any vertex $y \in N_{G}(u)\setminus\{x\}$ there is a path between $x$ and $y$ that avoids vertices of $N_{G}(u)$.
This means that there is an $S$-excluded path between $x$ and $y$ with $S=N_{G}[u]$.
Thus $x$ protects $u$ in $G$.

For the other direction, assume that $u$ is non-avoidable.
Then there are vertices $x,y \in N_{G}(u)$ that belong to different connected components of $G - (N_G[u] \setminus \{x,y\})$.
Thus $x$ cannot reach $y$ through vertices of $V(G) \setminus N_{G}[u]$, implying that $x$ (and $y$) does not protect $u$.
Therefore there are at least two vertices in $N_{G}(u)$ that do not protect $u$.
\end{proof}
\end{lemma}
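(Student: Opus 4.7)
The plan is to prove both directions by directly translating between the combinatorial definition of avoidability (Definition~\ref{def:avoidablevertex}) and the reachability condition that defines protecting (Definition~\ref{def:protects}). The key observation tying them together is that an $N_G[u]$-excluded path between two vertices $x,y \in N_G(u)$ is precisely a path whose internal vertices all lie in $V(G) \setminus N_G[u]$, which is exactly a path between $x$ and $y$ in $G - (N_G[u] \setminus \{x,y\})$.

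For the forward direction, I would assume $u$ is avoidable and fix an arbitrary $x \in N_G(u)$. Given any $y \in N_G(u) \setminus \{x\}$, Definition~\ref{def:avoidablevertex} provides a path $P$ between $x$ and $y$ in $G - (N_G[u] \setminus \{x,y\})$. Every internal vertex of $P$ lies in $V(G) \setminus N_G[u]$, so $P$ is an $N_G[u]$-excluded path between $x$ and $y$; edges to $x$ are the trivial one-edge case. By Definition~\ref{def:protects}, this shows $x$ protects $u$. Since $x$ was arbitrary, every neighbor of $u$ protects $u$.

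For the reverse direction, I would argue by contrapositive, as in the earlier excerpt. If $u$ is non-avoidable, then by Definition~\ref{def:avoidablevertex} there exist $x,y \in N_G(u)$ such that $x$ and $y$ lie in distinct connected components of $G - (N_G[u] \setminus \{x,y\})$. Hence no path in $G$ joins $x$ to $y$ using only internal vertices from $V(G) \setminus N_G[u]$, and in particular no $N_G[u]$-excluded path from $x$ to $y$ exists. Thus neither $x$ nor $y$ protects $u$, violating the ``for every $x \in N_G(u)$'' condition on the right-hand side.

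I do not anticipate any genuine obstacle: the lemma is essentially a reformulation, and the only subtlety is the bookkeeping about which endpoint vertex may appear in the path (both $x$ and $y$ are allowed in $N_G[u]$, only the internal vertices are forbidden), which the excluded-path definition handles automatically since it restricts only \emph{internal} vertices. A short remark may be useful to cover the degenerate cases $d_G(u) \leq 1$, where the universal quantifier over pairs is vacuously satisfied on both sides.
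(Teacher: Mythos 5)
Your proposal is correct and follows essentially the same argument as the paper: the forward direction translates the avoidability condition into $N_G[u]$-excluded paths witnessing that each neighbor protects $u$, and the reverse direction argues by contrapositive that a separating pair $x,y$ yields a neighbor that fails to protect $u$. The only difference is your explicit remark on the degenerate cases $d_G(u)\leq 1$, which the paper handles implicitly via the trivial-path convention.
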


Now we are ready to show our fast algorithm for deciding whether a vertex is avoidable which is given in Algorithm~\ref{algo:PTD}.

\begin{algorithm}[t]
\SetKwInOut{KwIn}{Input}
\SetKwInOut{KwOut}{Output}
\KwIn{A graph $G$ and a vertex $u$}
\KwOut{Returns true iff $u$ is avoidable in $G$}
Let $X = N_{u}$ and $G' = G - u$\;
\For{$x \in X$}{
  Set $S = X \setminus\{x\}$\;
  \If{Algorithm~\ref{algo:protectBFS}($G',x,S$) is not true}
  {
    \KwRet{false}\;
  }
}
\KwRet{true}\;
\caption{Testing if $u$ is avoidable by detecting whether its neighbors protect $u$}
\label{algo:PTD}
\end{algorithm}

\begin{theorem}\label{theo:protect}
Listing all avoidable vertices of $G$ by using Algorithm~\ref{algo:PTD} takes $O(n^2+m^2)$ time.
\end{theorem}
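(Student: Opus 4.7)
The plan is to combine correctness (which is handed to us by Lemma~\ref{lem:protect}) with a straightforward running-time accounting built on top of Lemma~\ref{lem:algoSpath}. First I would observe that Algorithm~\ref{algo:PTD} applied to a vertex $u$ returns true exactly when every neighbor $x \in N_G(u)$ protects $u$; by Lemma~\ref{lem:protect} this is equivalent to $u$ being avoidable in $G$, so correctness is immediate and the whole argument reduces to a complexity analysis. The natural way to apply the algorithm for listing is simply to run it on each of the $n$ vertices of $G$ and collect the avoidable ones.

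Next I would bound the cost of Algorithm~\ref{algo:PTD} on a single vertex $u$. The loop iterates over $x \in N_G(u) = X$, making $|X| = d(u)$ calls to Algorithm~\ref{algo:protectBFS} on the graph $G' = G - u$ with source $x$ and target set $S = X \setminus \{x\}$. By Lemma~\ref{lem:algoSpath}, each such call runs in $O(n+m)$ time, and the one-off construction of $G'$ is also $O(n+m)$. Consequently, deciding whether $u$ is avoidable costs $O(d(u) \cdot (n+m))$.

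Summing this bound over all vertices gives the total running time of the listing algorithm:
\[
\sum_{u \in V} O\bigl(d(u)(n+m)\bigr) \;=\; O\Bigl((n+m) \sum_{u \in V} d(u)\Bigr) \;=\; O\bigl(2m(n+m)\bigr) \;=\; O(nm + m^2).
\]
The only nonroutine step is to rewrite this as $O(n^2 + m^2)$, which I would do via the AM-GM inequality $2nm \leq n^2 + m^2$; this gives $nm + m^2 \leq \tfrac{1}{2}(n^2+m^2) + m^2 = O(n^2+m^2)$, as claimed. I do not foresee any real obstacle: once the per-vertex bound $O(d(u)(n+m))$ is established from Lemma~\ref{lem:algoSpath}, the remainder is bookkeeping plus a one-line AM-GM estimate.
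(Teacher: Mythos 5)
Your proposal is correct and follows essentially the same route as the paper: correctness from Lemma~\ref{lem:protect} (together with Lemma~\ref{lem:algoSpath} for the subroutine), and a running-time bound of $d(u)$ calls to Algorithm~\ref{algo:protectBFS} at $O(n+m)$ each, summed over all vertices. The only cosmetic difference is that the paper charges each vertex $O((1+d(u))(n+m))$ to cover the construction of $G'$ even when $d(u)$ is small, which yields $O((n+2m)(n+m))=O(n^2+m^2)$ directly, whereas you fold that cost into $O(d(u)(n+m))$ and then invoke AM--GM; either accounting lands on the stated bound.
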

\begin{proof}
Correctness follows from Lemmas~\ref{lem:algoSpath} and \ref{lem:protect}.
For the running time, observe that constructing $G'$ takes $O(n+m)$ time.
Moreover we need to make $d(u)$ calls to Algorithm~\ref{algo:protectBFS} for a particular vertex $u$ where $d(u)$ is the degree of $u$.
Thus, by Lemma~\ref{lem:algoSpath} the total running time is $O(\sum_{u} (1+d(u))(n+m)) = O(n^2+m^2)$.
\end{proof}


\section{Avoidable Vertices via Contractions}
Here we show how to compute all avoidable vertices of a graph $G$ through contractions.
Given a graph  $G=(V_G, E_G)$ and a vertex $u \in V_G$, we denote by $G_u$ the graph obtained from $G$ by contacting every connected component of $G - N_{G}[u]$.
We partition the vertices of $G_u - u$ into $(X,C)$, such that $X = N_{G}(u)$ and $C$ contains the contracted vertices of $G - N_{G}[u]$.
We denote by $G_{u}(X,C)$ the contracted graph where $(X,C)$ is the vertex partition with respect to $G_u$.
Observe that $G_{u}[X \cup \{u\}] = G[X \cup \{u\}]$ and $G_{u}[C \cup \{u\}]$ is an independent set.

\begin{obs}\label{obs:1contract}
Given a vertex $u$ of $G=(V,E)$, the construction of $G_u(X,C)$ can be done in $O(n+m)$ time.
\end{obs}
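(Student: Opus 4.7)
The plan is to produce $G_u(X,C)$ by a single linear-time traversal of $G$ that (i) identifies $X$, (ii) computes the connected components of $G - N_G[u]$, and (iii) installs exactly one edge between each vertex of $X$ and each component of $G-N_G[u]$ that it reaches.

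First I would scan the neighbor list of $u$ to obtain $X = N_G(u)$ and mark each $x \in X$, together with $u$ itself, as belonging to $N_G[u]$; this costs $O(d_G(u))$. Then I would run a standard BFS or DFS on the unmarked vertices, treating every edge incident to a marked vertex as absent, so that at termination each vertex $v \in V \setminus N_G[u]$ carries a label $c(v) \in \{1,\dots,k\}$ naming its connected component in $G - N_G[u]$. Pick an arbitrary representative for each component; these $k$ representatives form the vertex set $C$ of $G_u$. The traversal runs in $O(n+m)$ time.

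Next I would build the edges of $G_u$. Edges inside $G[X \cup \{u\}]$ are simply inherited from $G$, which takes time proportional to the number of such edges. For the bipartite part between $X$ and $C$, the main obstacle is avoiding duplicate edges when several neighbors of some $x \in X$ in $V \setminus N_G[u]$ lie in the same component. I would handle this by a classical mark-and-unmark trick: for each $x \in X$, iterate once over $N_G(x) \setminus N_G[u]$, and for each such neighbor $v$ check whether the label $c(v)$ has already been marked during the processing of $x$; if not, install the edge between $x$ and the representative of component $c(v)$ and mark $c(v)$. After processing $x$, walk the same list once more to clear the marks so that the next vertex of $X$ starts with a clean slate. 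Each edge of $G$ is inspected a constant number of times during this phase, so the total cost is $O(n+m)$.

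Combining the three phases, the construction of $G_u(X,C)$ runs in $O(n+m)$ time, as claimed. No sophisticated data structure is required beyond the component-label array and the auxiliary mark array of length $k \le n$.
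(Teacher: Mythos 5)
Your proposal is correct and follows essentially the same approach as the paper: compute the connected components of $G - N_G[u]$ in linear time and then build the contracted adjacencies by a single scan of the relevant adjacency lists, for total cost $O(n+m)$. The only cosmetic difference is that the paper accumulates $N_G(C_i)$ by scanning each component's adjacency lists, whereas you deduplicate from the $X$ side with a mark-and-unmark pass; both are linear-time implementations of the same contraction.
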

\begin{proof}
To compute the connected components $C_1, \ldots, C_k$ of $G - N_{G}[u]$ takes linear time.
For each vertex set $C_i$, $1 \leq i \leq k$, we compute $N_{G}(C_i)$ in time $d(C_i)$ where $d(C_i)$ is the sum of the degrees of the vertices in $C_i$.
As $C_1, \ldots, C_k$ is a partition of $V(G) \setminus N_{G}[u]$, the total running time for substituting each set $C_i$ is $O(k + \sum{d(C_i)}) = O(n+m)$.
\end{proof}

Next we show that $G_{u}(X,C)$ holds all necessary information of important paths of $G$ with respect to the avoidability of $u$.

\begin{lemma}\label{lem:2contract}
Let $u$ be a vertex of a graph $G=(V,E)$. Then $u$ is avoidable in $G$ if and only if $u$ is avoidable in $G_{u}(X,C)$.
\end{lemma}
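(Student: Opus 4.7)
The plan is to verify both implications by translating paths between $G$ and $G_u(X,C)$, exploiting the structural properties of the contracted graph. First I would record what the construction guarantees: $N_{G_u}(u)=N_G(u)=X$, $C$ is an independent set in $G_u$ that is entirely non-adjacent to $u$, and for each contracted vertex $c_i\in C$ corresponding to a component $C_i$ of $G-N_G[u]$, we have $N_{G_u}(c_i)=N_G(C_i)\subseteq X$. In particular, $N_{G_u}[u]\setminus\{x,y\}=(X\setminus\{x,y\})\cup\{u\}$ for any $x,y\in X$, so removing this set from $G_u$ leaves exactly the vertices in $\{x,y\}\cup C$.

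For the forward direction, assume $u$ is avoidable in $G$, and pick $x,y\in X=N_{G_u}(u)$. By Definition~\ref{def:avoidablevertex}, there is a path $P=\mypath{x=v_0,v_1,\ldots,v_\ell=y}$ in $G$ whose internal vertices lie in $V\setminus N_G[u]$, hence partitioned among the components $C_1,\ldots,C_k$. I would replace each maximal contiguous block of internal vertices of $P$ that lies inside some component $C_i$ by the single vertex $c_i$ of $G_u$. Since consecutive blocks in different components are separated by a vertex of $X\cap V(P)\subseteq\{x,y\}$, the resulting sequence is a walk in $G_u(X,C)$ from $x$ to $y$ whose internal vertices lie in $C$. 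From this walk one extracts a path, which witnesses that $x$ and $y$ lie in the same component of $G_u-(N_{G_u}[u]\setminus\{x,y\})$. Thus $u$ is avoidable in $G_u(X,C)$.

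The converse is the more delicate direction, but also the one where the structure of $G_u(X,C)$ helps the most. Suppose $u$ is avoidable in $G_u(X,C)$, and take $x,y\in N_G(u)=X$. There is an induced path $P$ between $x$ and $y$ in $G_u-(N_{G_u}[u]\setminus\{x,y\})$, and the only vertices available besides $\{x,y\}$ are those in $C$. Because $C$ is an independent set in $G_u$, no two internal vertices of $P$ can be adjacent, which forces $P$ to have length at most $2$. If $P$ is the edge $xy$, then $xy\in E(G_u)\cap E(G)$ since $G_u[X\cup\{u\}]=G[X\cup\{u\}]$, and we are done. Otherwise $P=\mypath{x,c_i,y}$ for some $c_i\in C$, and here I would use the key observation that $c_i$'s adjacency to both $x$ and $y$ lifts to the existence of vertices $x',y'\in C_i$ with $xx',yy'\in E(G)$; since $C_i$ is connected in $G-N_G[u]$, I join $x'$ and $y'$ by an internal path inside $C_i$, yielding a walk in $G$ from $x$ to $y$ whose internal vertices all lie in $V\setminus N_G[u]$, hence avoiding $N_G[u]\setminus\{x,y\}$. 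Extracting a path establishes avoidability of $u$ in $G$.

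The main obstacle, if any, is just making sure the lifts and contractions are performed cleanly—in particular, checking in the backward direction that the independence of $C$ really does limit $P$ to length at most two, and that the lifting to $G$ produces a walk entirely inside $(V\setminus N_G[u])\cup\{x,y\}$. Once that structural constraint is in place, both directions follow by straightforward inspection.
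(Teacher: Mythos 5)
Your proposal is correct and follows essentially the same route as the paper: both directions translate paths between $G$ and $G_u(X,C)$ by contracting the internal segment (which lies in a single component of $G-N_G[u]$) to one vertex of $C$, and conversely lifting a length-two path $\mypath{x,C_i,y}$ back to $G$ using the connectivity of $C_i$; your use of the independence of $C$ to bound the path length matches the paper's observation that such a path must have length two. The only cosmetic difference is that the paper restricts attention to non-adjacent pairs $x,y$ (handling adjacent pairs via $G[X\cup\{u\}]=G_u[X\cup\{u\}]$), whereas you treat all pairs uniformly.
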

\begin{proof}
Since $G[X \cup \{u\}] = G_{u}[X \cup \{u\}]$, we only need to consider the vertices of $X=N_{G}(u)$ that are non-adjacent.
Let $x,y \in N_{G}(u)=X$ such that $xy \notin E(G)$ and let $S = (X \cup \{u\})\setminus\{x,y\}$.
Observe that all vertices of $S$ belong to both graphs $G$ and $G_u(X,C)$.
We claim that
there is a path between $x$ and $y$ in $G - S$ if and only if
there is a path between $x$ and $y$ in $G_{u}(X,C) - S$.
Consider any path in $G - S$ of the form $\mypath{x,P,y}$.
The vertices of the given path belong to the same connected component of $G - S$.
Thus the vertices of $P$ belong to exactly one connected component $C_P$ of $G - (S\cup \{x,y\})$.
As $S\cup \{x,y\} = N_{G}[u]$, there is a vertex $C_i \in C$ that corresponds to $C_P$ in the contracted graph $G_{u}(X,C)$.
Hence, the path $\mypath{x,C_i,y}$ forms the desired path in $G_{u}(X,C) - S$.

If there is a path between $x$ and $y$ in $G_{u}(X,C) - S$ then such a path is of length two and has the form $\mypath{x,C_i,y}$ where $C_i \in C$.
Since $xC_i$ is an edge in $G_{u}(X,C)$, there is a vertex $a \in V(C_i)$ such that $xa \in E(G)$.
Similarly, there is a vertex $b \in V(C_i)$ such that $yb \in E(G)$.
As $a$ and $b$ belong to the same connected component $C_i$ of $G - N_{G}[u]$, there is a path $P_i$ in $G$ between $a$ and $b$ that contains only vertices from $V(C_i)$.
Thus there is a path $\mypath{x,P_i,y}$ in $G$ where $P_i \subseteq V(C_i)$.

Now observe that any path between two neighbors of $u$ in either $G - S$ or $G_{u}(X,C) - S$ does not contain any vertex of $N_{G}[u]$.
Therefore, by the above claim, we get the desired characterization of $u$ in both graphs.
\end{proof}

Lemma~\ref{lem:2contract} implies that we can apply all of our algorithms given in the previous section in order to recognize an avoidable vertex.
Although such an approach does not lead to faster theoretical time bounds, in practice the contracted graph has substantial smaller size than the original graph
and may lead to practical running times.
We next show that the contracted graph results in an additional algorithm with different running time.

Let $G_{u}(X,C)$ be the contracted graph of a vertex $u$. The \emph{filled-contracted graph}, denoted by $H_u(X,C)$, is the graph obtained from $G_{u}(X,C)$ by adding all necessary edges in order to make every neighborhood of $C_i \in C$ a clique. That is, for every $C_i \in C$, $N_{H_u}(C_i)$ is a clique.
The following proof resembles the characterization given through minimal triangulations in Lemma~\ref{lem:char_simpl}.
However observe that $H_u(X,C)$ is not necessarily a chordal graph, because $X \nsubseteq N_{G_u}(C)$.

%
%

\begin{lemma}\label{lem:conSimplicial}
A vertex $u$ is avoidable in $G$ if and only if $H_u[X]$ is a clique.
\end{lemma}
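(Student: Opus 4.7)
The plan is to combine this statement with Lemma~\ref{lem:2contract}, so the task reduces to showing that $u$ is avoidable in the contracted graph $G_u(X,C)$ if and only if $H_u[X]$ is a clique. The key structural observation I would first record is that in $G_u(X,C)$ every vertex $C_i \in C$ is non-adjacent to $u$, the set $C$ is an independent set (different components of $G - N_G[u]$ produce non-adjacent contracted vertices), and $N_{G_u}(C_i) \subseteq X$. Consequently, any path in $G_u$ between two vertices $x,y \in X$ whose internal vertices avoid $N_{G_u}[u] \setminus \{x,y\}$ must have all internal vertices in $C$, and since $C$ is independent, such a path can only have the form $\mypath{x, C_i, y}$ for some single $C_i \in C$.

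For the forward direction, I would fix any $x,y \in X$. If $xy \in E(G_u)$ then $xy \in E(H_u)$ trivially. Otherwise, by Definition~\ref{def:avoidablevertex} applied to the avoidable vertex $u$ in $G_u(X,C)$, there is a path from $x$ to $y$ avoiding $(X \cup \{u\}) \setminus \{x,y\}$, which by the observation above is of the form $\mypath{x,C_i,y}$. Hence $x,y \in N_{G_u}(C_i)$, and so by the very definition of $H_u$, the edge $xy$ is a fill edge. Thus every pair in $X$ is adjacent in $H_u$, i.e.\ $H_u[X]$ is a clique.

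For the converse, assume $H_u[X]$ is a clique and let $x,y$ be any two vertices of $N_{G_u}(u) = X$. If $xy \in E(G_u)$ there is nothing to verify. If $xy \notin E(G_u)$, then $xy$ is an edge of $H_u \setminus G_u$, which by construction means there exists $C_i \in C$ such that both $x$ and $y$ lie in $N_{G_u}(C_i)$. The path $\mypath{x, C_i, y}$ in $G_u$ then connects $x$ and $y$ while avoiding $(X \cup \{u\}) \setminus \{x,y\}$, since $C_i \notin X$ and $C_i$ is not adjacent to $u$. This shows $u$ is avoidable in $G_u(X,C)$, and Lemma~\ref{lem:2contract} transfers the conclusion back to $G$.

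I do not anticipate a real obstacle here; the whole argument hinges on the single observation that in the contracted graph any $u$-avoiding path between two neighbors of $u$ must be a length-two path through a unique $C_i$. The only subtlety worth double-checking is that the filled edges of $H_u$ correspond precisely to the existence of such a $C_i$, which is immediate from the definition of $H_u$ and from $N_{G_u}(C_i) \subseteq X$.
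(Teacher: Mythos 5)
Your proof is correct and follows essentially the same route as the paper: reduce via Lemma~\ref{lem:2contract} to the contracted graph, and then match fill edges of $H_u[X]$ with pairs $x,y$ lying in the neighborhood of a common contracted vertex $C_i$, using that $C$ is independent and $N_{G_u}(C_i)\subseteq X$. The only cosmetic difference is that the paper argues the second direction by contrapositive (non-avoidable implies a non-adjacent pair in $H_u[X]$), whereas you argue it directly; the content is identical.
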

\begin{proof}
We apply Lemma~\ref{lem:2contract} and we need to show that $u$ is avoidable in $G_{u}(X,C)$ if and only if $H_u[X]$ is a clique.
Assume that $u$ is avoidable in $G_{u}(X,C)$. We show that $H_u[X]$ is a clique.
Consider two vertices $x,y \in X$.
If $xy$ is an edge in $G_{u}(X,C)$ then $xy$ remains an edge in $H_u(X,C)$, as $G_{u}(X,C)$ is a subgraph of $H_u(X,C)$.
If $x$ and $y$ are non-adjacent in $G_{u}(X,C)$, there is a vertex $C_i \in C$ such that $\{x,y\} \subseteq N_{G_{u}}(C_i)$, because $u$ is avoidable and $G_u[C]$ is an independent set.
Thus, by the definition of $H_u(X,C)$, $N_{H_u}(C_i)$ is a clique implying that $xy$ is an edge in $H_u[X]$.

Assume that $u$ is non-avoidable in $G_{u}(X,C)$. Then there are vertices $x,y \in X$ such that $xy \notin E(G_u)$ and they belong in different connected components of $G_u[C \cup \{x,y\}]$.
Thus $x$ and $y$ is a pair of non-adjacent vertices in $H_u[X]$, since there is no vertex $C_i \in C$ such that $x,y \in N_{G_u}(C_i)$.
Hence there is a pair of non-adjacent vertices in $H_u[X]$, so that $H_u[X]$ is not a clique.
\end{proof}

We take advantage of Lemma~\ref{lem:conSimplicial} in order to recognize whether $u$ is avoidable.
The naive construction of $H_u(X,C)$ requires $O(n^3)$ time, since $|X| \leq n$ and $|C|\leq n$.
Instead of constructing $H_u(X,C)$, we are able to check $H_u[X]$ in an efficient way through matrix multiplication.
To do so, we consider the graph $G'$ obtained from $G_u(X,C)$ by removing $u$ and deleting every edge with both endpoints in $X$.
Observe that the resulting graph $G'$ is a bipartite graph with bipartition $(X,C)$, as $G_{u}[C \cup \{u\}]$ is an independent set.
It turns out that it is enough to check whether two vertices of $X$ are in distance two in $G'$ which can be encapsulated by the square of its adjacency matrix.
Algorithm~\ref{algo:conMT} shows in details our proposed approach.

\begin{algorithm}[t]
\SetKwInOut{KwIn}{Input}
\SetKwInOut{KwOut}{Output}
\KwIn{A graph $G$ and a vertex $u$}
\KwOut{Returns true iff $u$ is avoidable in $G$}
Construct the contracted graph $G_{u}(X,C)$ of $u$\;
Let $G_1 = G_u(X,C) - u$\;
Construct the adjacency matrix $M_1$ of $G_1$\;
Let $G_2$ be the bipartite graph obtained from $G_1$ by removing every edge having both endpoints in $X$\;
Construct the adjacency matrix $M_2$ of $G_2$\;
Compute the square of $M_2$, i.e., $M_{2}^{2} = M_2 \cdot M_2$\;
Construct the matrix $M_3 = M_1 + M_{2}^{2}$\;
\For{$x,y \in X$}{
  \If{the entry $M_3[x,y]$ is zero}
  {
    \KwRet{false}\;
  }
}
\KwRet{true}\;
\caption{Testing if $u$ is avoidable by using matrix multiplication}
\label{algo:conMT}
\end{algorithm}

We are now in position to claim the following running time through matrix multiplication.

\begin{theorem}
Listing all avoidable vertices of $G$ by using Algorithm~\ref{algo:conMT} takes $O(n^{1+\omega})$ time, where $O(n^{\omega})$ is the time required to multiply two $n \times n$ binary matrices.
\end{theorem}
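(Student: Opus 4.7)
The plan is to verify that Algorithm~\ref{algo:conMT} correctly implements the test in Lemma~\ref{lem:conSimplicial} and then to charge the work at each vertex to a single $n \times n$ matrix multiplication. Correctness reduces to the claim that, for $x,y \in X$, the entry $M_3[x,y]$ is positive if and only if $xy$ is an edge of $H_u[X]$.

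First I would analyze the two summands separately. The matrix $M_1$ is the adjacency matrix of $G_1 = G_u(X,C) - u$, so $M_1[x,y] = 1$ exactly when $xy \in E(G_u)$; for $x,y \in X$ this is equivalent to $xy \in E(G[X])$. The matrix $M_2$ is the adjacency matrix of the bipartite graph $G_2$ obtained from $G_1$ by deleting the $X$-internal edges, so for $x,y \in X$ the entry $M_2^2[x,y]$ counts walks of length two in $G_2$ from $x$ to $y$, and every such walk must pass through a vertex of $C$ since the sides of the bipartition are $X$ and $C$. Hence $M_2^2[x,y] > 0$ iff $x$ and $y$ share a common neighbor $C_i \in C$ in $G_u$, which by the construction of $H_u(X,C)$ is exactly the condition under which the fill edge $xy$ is added. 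Combining the two cases, $M_3[x,y] = M_1[x,y] + M_2^2[x,y]$ is positive iff $xy \in E(H_u[X])$, and the algorithm returns true iff $H_u[X]$ is a clique, which by Lemma~\ref{lem:conSimplicial} is equivalent to $u$ being avoidable in $G$.

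For the running time, I would bound the cost of a single call of Algorithm~\ref{algo:conMT}. Observation~\ref{obs:1contract} gives $G_u(X,C)$ in $O(n+m)$ time; constructing $M_1$ and $M_2$ takes $O(n^2)$ time since the contracted graph has at most $n$ vertices; the square $M_2^2$ is computed in $O(n^{\omega})$ time; the summation yielding $M_3$ and the final scan over pairs $(x,y) \in X \times X$ each take $O(n^2)$ time. Therefore a single invocation of the algorithm runs in $O(n^{\omega})$ time. Executing it once per vertex of $G$ yields the total bound of $O(n \cdot n^{\omega}) = O(n^{1+\omega})$.

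The only subtlety I anticipate is bookkeeping for the bipartite square: one must be careful that walks counted by $M_2^2[x,y]$ for $x,y \in X$ cannot accidentally traverse an $X$-internal edge (they cannot, since those edges have been deleted from $G_2$) and cannot pass through $u$ (they cannot, since $u$ was removed before forming $G_1$). Once these are observed, the argument goes through as sketched, and no further structural result is required beyond Lemma~\ref{lem:conSimplicial} and Observation~\ref{obs:1contract}.
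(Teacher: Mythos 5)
Your proposal is correct and matches the paper's own argument essentially step for step: correctness is reduced to Lemma~\ref{lem:conSimplicial} by showing that for $x,y \in X$ the entry $M_3[x,y]=M_1[x,y]+M_2^2[x,y]$ is positive exactly when $xy \in E(H_u[X])$, using that any length-two walk in the bipartite graph $G_2$ must pass through $C$, and the running time is charged to one $O(n^{\omega})$ matrix multiplication per vertex plus $O(n^2+m)$ overhead, giving $O(n^{1+\omega})$ over all $n$ vertices.
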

\begin{proof}
We apply Algorithm~\ref{algo:conMT} on each vertex of $G$.
Let us first discuss on the correctness of Algorithm~\ref{algo:conMT}.
By Lemma~\ref{lem:conSimplicial}, it is enough to show that $H_u[X]$ is a clique if and only if $M_3[X]$ has non-zero entries in its non-diagonal positions.
Let $G_1$ and $G_2$ be the two constructed graphs in Algorithm~\ref{algo:conMT}.
Observe that the square of $G_2$, denoted by $G_{2}^{2}$, is the graph obtained from the same vertex set of $G_2$ and two vertices $u,v$ are adjacent in $G_{2}^{2}$ if the distance of $u$ and $v$ is at most two in $G_2$.
Thus the matrix $M_{2}^{2}$ computed by Algorithm~\ref{algo:conMT} corresponds to the adjacency matrix of $G_{2}^{2}$.
Now it is enough to notice that two vertices $x,y$ of $X$ are adjacent in $H_u[X]$ if and only if $xy \in E(G_1) \cup E(G_{2}^{2})$.
In particular observe that if $x$ and $y$ have a common neighbor $w$ in $G_2$ then $w$ is a vertex of $C$ since there is no edge between vertices of $X$ in $G_2$ and $u \notin V(G_2)$.
Therefore $M_3[x,y]$ has a non-zero entry if and only if $x$ and $y$ are adjacent in $H_u[X]$.

Regarding the running time, notice that the construction of $G_u$ take linear time by Observation~\ref{obs:1contract}.
All steps besides the computation of $M_{2}^{2}$ can be done in $O(n^2)$ time.
The most time-consuming step is the matrix multiplication involved in computing $M_{2}^{2}$, which can be done in $O(n^{\omega})$ time.
Hence the total running time for recognizing all $n$ vertices takes $O(n^{1+\omega})$ time.
\end{proof}

\section{Recognizing Avoidable Edges and Paths}
Natural generalizations of avoidable vertices are avoidable edges and avoidable paths.
Here we show how to efficiently recognize an avoidable edge and an avoidable path.
Recall that the two vertices having degree one in an induced path $P_k$ on $k \geq 2$ vertices are called \emph{endpoints}.
Moreover, the edge obtained after removing the endpoints from an induced path $P_4$ on four vertices is called \emph{middle edge}.

\begin{definition}[simplicial and avoidable edge]\label{def:avoidableedge}
An edge $uv$ is called \emph{simplicial} if there is no $P_4$ having $uv$ as a middle edge.
An edge $uv$ is called \emph{avoidable} if either $uv$ is simplicial, or every $P_4$ with middle edge $uv$ is contained in an induced cycle.
\end{definition}

\noindent Given two vertices $x$ and $y$ of $G$, we define the following sets of the neighbors of $x$ and $y$:
\begin{itemize}
\item $B(x,y)$ contains the common neighbors of $x$ and $y$; i.e., $B(x,y)=N_{G}(x) \cap N_{G}(y)$.
\item $A_x$ contains the private neighbors of $x$; i.e., $A_x = N_{G}(x) \setminus (B(x,y) \cup \{y\})$.
\item $A_y$ contains the private neighbors of $y$; i.e., $A_y = N_{G}(y) \setminus (B(x,y) \cup \{x\})$.
\end{itemize}
Under this terminology, observe that $A_x \cap A_y = \emptyset$ and $N_{G}(\{x,y\})$ is partitioned into the three sets $B(x,y), A_x, A_y$.
Clearly all described sets can be computed in $O(d(x)+d(y))$ time.

\begin{obs}\label{obs:edgesimplicial}
An edge $xy$ of $G$ is simplicial if and only if $A_x = \emptyset$ or $A_y=\emptyset$ or every vertex of $A_x$ is adjacent to every vertex of $A_y$.
\end{obs}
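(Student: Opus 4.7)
The plan is to unpack the definition of a $P_4$ with middle edge $xy$ and match the roles of its endpoints against the sets $A_x$, $A_y$, $B(x,y)$.

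First, I would observe that a $P_4$ with middle edge $xy$ must have the form $\mypath{a,x,y,b}$ where $a \neq y$, $b \neq x$, $ax, yb \in E(G)$, and the non-edges $ay, xb, ab \notin E(G)$ hold (otherwise we obtain a triangle or a shorter induced path instead of a $P_4$). From $ax \in E(G)$ together with $ay \notin E(G)$ and $a \neq y$ it follows that $a \in N_G(x) \setminus (N_G(y) \cup \{y\}) = A_x$; symmetrically $b \in A_y$. Conversely, any choice of $a \in A_x$ and $b \in A_y$ with $ab \notin E(G)$ produces such a $P_4$, because by definition $A_x$ and $A_y$ are disjoint, $a \notin \{x,y\}$, $b \notin \{x,y\}$, $ay \notin E(G)$ and $xb \notin E(G)$.

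Thus the existence of a $P_4$ with middle edge $xy$ is equivalent to the existence of a pair $(a,b) \in A_x \times A_y$ with $ab \notin E(G)$. Negating this gives precisely the stated condition: either $A_x = \emptyset$, or $A_y = \emptyset$, or every vertex of $A_x$ is adjacent to every vertex of $A_y$.

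The only subtlety, and thus the main thing I would be careful about, is to verify that both the edge/non-edge pattern and the vertex-distinctness required by the induced path are automatically encoded in the set definitions: since $B(x,y)$, $A_x$, $A_y$ partition $N_G(\{x,y\}) \setminus \{x,y\}$, and since $A_x \cap A_y = \emptyset$ with $x \neq y$, the four vertices $a, x, y, b$ of the candidate $P_4$ are necessarily distinct, and the required non-adjacencies $ay$ and $xb$ follow from the defining set differences. No further case analysis is needed.
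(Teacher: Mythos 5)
Your proof is correct and follows essentially the same route as the paper: both directions identify the endpoints of a $P_4$ with middle edge $xy$ as a non-adjacent pair in $A_x \times A_y$ and conversely build a $P_4$ from any such pair. Your extra verification of vertex distinctness (via $A_x \cap A_y = \emptyset$) and of the forced non-adjacencies is a harmless elaboration of details the paper leaves implicit.
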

\begin{proof}
Consider a $P_4=a,x,y,b$ that contains $xy$ as a middle edge.
Then $a \in A_x$ and $b \in A_y$ because $ay \notin E(G)$ and $xb \notin E(G)$.
Thus both sets $A_x$ and $A_y$ are non-empty.
Moreover, since $ab \notin E(G)$, we deduce that any non-edge with one endpoint in $A_x$ and the other in $A_y$ results in a $P_4$ having $xy$ as a middle edge.
\end{proof}

By Observation~\ref{obs:edgesimplicial}, the recognition of a simplicial edge can be achieved in $O(n+m)$ time:
consider the bipartite subgraph $H(A_x, A_y)$ of $G[A_x \cup A_y]$ which is obtained by removing every edge having both endpoints in either $A_x$ or $A_y$.
Then it is enough to check whether $H(A_x, A_y)$ is a complete bipartite graph.

We show that the more general concept of an avoidable edge can be recognized in $O(nm)$ time.
For doing so, we will take advantage of Algorithm~\ref{algo:protectBFS} and the notion of protecting given in Definition~\ref{def:protects}.
\begin{definition}[protected edge]\label{def:protectededge}
An edge $xy$ is \emph{protected} if there is an $(N_{G}[x] \cup N_{G}[y])$-excluded path between every vertex of $N_{G}(x)$ and every vertex of $N_{G}(y)$.
\end{definition}

\begin{figure}[t]
\centering
\includegraphics[scale= 1.4]{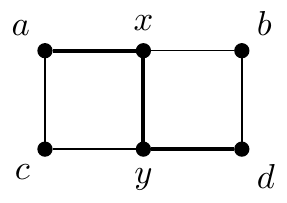}
\caption{In this example we have $N_{G}[x] \cup N_{G}[y] = V(G)$.
Observe that $x$ protects $y$, because $x$ has $\{c,y,d\}$-excluded paths to both $c$ and $d$, and similarly $y$ protects $x$.
However, the edge $xy$ is not protected because, for instance, there is no $V(G)$-excluded path (and, thus, an edge) between $a$ and $d$.
Also notice that there is a $P_4 = \mypath{a,x,y,d}$ that is not contained in an induced cycle.
}\label{fig:domino}
\end{figure}

We note that if an edge $xy$ is protected then $x$ protects $y$ and $y$ protects $x$ in accordance to Definition~\ref{def:protects}.
However, the reverse is not necessarily true, as shown in Figure~\ref{fig:domino}.

%

\begin{lemma}\label{lem:edgeprotected}
Let $xy$ be an edge of $G$. Then $xy$ is an avoidable edge in $G$ if and only if $xy$ is a protected edge in $G - B(x,y)$.
\end{lemma}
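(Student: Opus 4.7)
The plan is to unfold both sides of the equivalence against the same family of paths. Let $G' = G - B(x,y)$. First I would record that $N_{G'}(x) = A_x \cup \{y\}$ and $N_{G'}(y) = A_y \cup \{x\}$, so $N_{G'}[x] \cup N_{G'}[y] = \{x,y\} \cup A_x \cup A_y$. Because $B(x,y)$ has been deleted, the internal vertices permitted on an $(N_{G'}[x] \cup N_{G'}[y])$-excluded path in $G'$ are precisely the vertices of $V(G) \setminus (N_G[x] \cup N_G[y])$; these are exactly the vertices that may legitimately sit strictly between $b$ and $a$ on any induced cycle that extends a $P_4 = \mypath{a,x,y,b}$ with middle edge $xy$, since any common neighbor of $x$ and $y$ would create two chords. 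Recall also that every such $P_4$ has $a \in A_x$, $b \in A_y$, and $ab \notin E(G)$, as in the proof of Observation~\ref{obs:edgesimplicial}.

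For the forward direction, I assume $xy$ is avoidable and pick $a \in N_{G'}(x)$, $b \in N_{G'}(y)$. I would split into four cases depending on whether $a$ equals $y$ or lies in $A_x$, and whether $b$ equals $x$ or lies in $A_y$; three of them are immediate via the edges $xy$, $yb$, and $xa$, each of which is trivially an excluded path. In the remaining case $(a,b) \in A_x \times A_y$, either $ab \in E(G)$ and the edge is already an excluded path, or $\mypath{a,x,y,b}$ is a $P_4$ with middle edge $xy$ and avoidability produces an induced cycle containing it; the portion of that cycle joining $b$ back to $a$ has its internal vertices outside $N_G[x] \cup N_G[y]$, so it is the required $(N_{G'}[x] \cup N_{G'}[y])$-excluded $a$--$b$ path in $G'$.

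For the reverse direction, I assume $xy$ is protected in $G'$ and take an arbitrary $P_4 = \mypath{a,x,y,b}$ of $G$ with middle edge $xy$. Protection yields an $(N_{G'}[x] \cup N_{G'}[y])$-excluded $a$--$b$ path in $G'$; I choose one of minimum length, say $\mypath{a,v_1,\ldots,v_k,b}$. Each $v_i$ lies outside $N_G[x] \cup N_G[y]$, which immediately rules out the chords $xv_i$ and $yv_i$, while a standard shortest-path minimality argument forbids the edges $av_i$ for $i \geq 2$, $v_ib$ for $i \leq k-1$, and $v_iv_j$ for $|i-j| \geq 2$. Consequently $\mypath{a,x,y,b,v_k,\ldots,v_1}$ is an induced cycle containing the original $P_4$, so $xy$ is avoidable.

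The one genuine subtlety, and the reason for the $-B(x,y)$ reduction, is that testing protectedness directly in $G$ would bring the common neighbors $w \in B(x,y)$ into play as candidate endpoints in both $N_G(x)$ and $N_G(y)$, even though such a $w$ can never lie on any induced cycle through a $P_4$ with middle edge $xy$; this would impose spurious connectivity requirements unrelated to avoidability. Once $B(x,y)$ is removed, the definition of protected edge lines up cleanly with the $(a,b)$-pairs that matter, and the rest is essentially the case analysis above together with the shortest-path minimality step.
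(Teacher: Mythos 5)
Your proof is correct and follows essentially the same route as the paper: reduce to $G-B(x,y)$, identify the relevant pairs $(a,b)\in A_x\times A_y$ with $ab\notin E(G)$, extract the $a$--$b$ arc of an induced cycle for one direction, and turn an excluded path into an induced one (you via a minimum-length excluded path, the paper via a shortest path inside the vertex set of the excluded path) for the other. The only difference is organizational: you argue directly between avoidability in $G$ and protectedness in $G-B(x,y)$, whereas the paper splits this into ``avoidable in $G$ iff avoidable in $H$'' followed by ``avoidable in $H$ iff protected in $H$''.
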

\begin{proof}
Let $H = G - B(x,y)$ and let us first show that $xy$ is an avoidable edge in $G$ if and only if $xy$ is an avoidable edge in $H$.
Suppose that $xy$ is an avoidable edge in $G$.
For any two vertices $a \in A_x$ and $b \in A_y$ such that $ab \notin E(G)$, there is an induced cycle $C$ that contains $a,x,y,b$.
Now observe that no vertex of $B(x,y)$ belongs to $C$, as $C$ is an induced cycle in $G$.
Thus $xy$ is an avoidable edge in $H$.
For the converse, notice that $H$ is an induced subgraph of $G$, so that all induced cycles of $H$ remain induced cycles in $G$.
Therefore our task is to show that $xy$ is an avoidable edge in $H$ if and only if $xy$ is protected in the same graph $H$.

Suppose that $xy$ is an avoidable edge in $H$. Observe that $N_{H}(x) = A_x \cup \{y\}$ and $N_{H}(y) = A_y \cup \{x\}$.
If at least one of $A_x, A_y$ is empty then $xy$ is protected (as well as simplicial),
because all required $(N_{H}[x] \cup N_{H}[y])$-excluded paths have length one between a vertex and its neighbors.
Consider any two vertices $a \in A_x$ and $b \in A_y$.
Clearly the edges $xa$ and $yb$ constitute $N_{H}[y]$-excluded path and $N_{H}[x]$-excluded path, respectively.
Assume first that $ab \notin E(H)$.
Then there is a $P_4=\mypath{a,x,y,b}$ that contains $xy$ as a middle edge.
Any induced cycle $C$ that contains the described $P_4$, contains vertices from $V(H) \setminus (A_x \cup A_y)$, so that the vertices of $C - P_4$ belong to $V(H) \setminus (N_{H}[x] \cup N_{H}[y])$.
Thus the subpath on $C$ taken from $C-P_4$ with endpoints $a$ and $b$ is a $(A_x \cup A_y \cup \{x,y\})$-excluded path of length at least two between $a$ and $b$.
If $ab \in E(H)$ then $\mypath{a,b}$ is an $(A_x \cup A_y \cup \{x,y\})$-excluded path of length one between $a$ and $b$.
In all cases we deduce that $xy$ is a protected edge.

Suppose that $xy$ is a protected edge in $H$.
Consider a $P_4=\mypath{a,x,y,b}$ that contains $xy$ as middle edge.
Then clearly $a \in A_x$, $b \in A_y$, and $ab \notin E(H)$.
We show that there is an induced cycle in $H$ that contains the $P_4$.
Between $a$ and $b$, there is an $(N_{H}[x] \cup N_{H}[y])$-excluded path $P_{ab}$ in $H$.
The length of $P_{ab}$ is at least two, since $ab \notin E(H)$.
By definition, all internal vertices of $P_{ab}$ belong to $V(H) \setminus (N_{H}[x] \cup N_{H}[y])$ and, thus, are non-adjacent to $x$ and $y$.
Let $S = V(P_{ab})$ and consider the induced subgraph $H[S]$ that is connected.
Then the shortest path $P'_{ab}$ between $a$ and $b$ in $H[S]$ is an induced path of $H$.
Therefore the concatenation of the $P_4=\mypath{a,x,y,b}$ with $P'_{ab}$ results in the desired induced cycle of $H$.
\end{proof}

Based on Lemma~\ref{lem:edgeprotected}, we deduce the following running time for recognizing an avoidable edge.
This is achieved by carefully applying Algorithm~\ref{algo:protectBFS}.
Notice that the stated running time is comparable to the $O(d(u)(n+m))$-time algorithm for recognizing an avoidable vertex $u$ implied by Theorem~\ref{theo:protect}.

\begin{theorem}\label{theo:avoidableedge}
Recognizing an avoidable edge of a graph $G$ can be done in $O(n \cdot m)$ time.
\end{theorem}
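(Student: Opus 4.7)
The plan is to invoke Lemma~\ref{lem:edgeprotected}, which reduces testing whether $xy$ is avoidable in $G$ to testing whether $xy$ is protected in $H := G - B(x,y)$. After computing $B(x,y)$, $A_x$, $A_y$, and $H$ in $O(n+m)$ time, I would observe that in $H$ the open neighborhoods split as $N_H(x) = A_x \cup \{y\}$ and $N_H(y) = A_y \cup \{x\}$, so every pair in $N_H(x) \times N_H(y)$ involving $x$ or $y$ is already joined by an edge of $H$ and hence admits a trivial $(N_H[x] \cup N_H[y])$-excluded path. Protection therefore boils down to the following condition: for each $a \in A_x$ and each $b \in A_y$ there is an $(N_H[x] \cup N_H[y])$-excluded path from $a$ to $b$ in $H$, i.e.\ a path whose internal vertices all lie in $R := V(H) \setminus (N_H[x] \cup N_H[y])$.

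To verify this condition I would, for every $a \in A_x$, form the auxiliary graph
\[
H_a \;:=\; H \;-\; \bigl((N_H[x] \cup N_H[y]) \setminus (\{a\} \cup A_y)\bigr),
\]
so that the only vertices of $N_H[x] \cup N_H[y]$ surviving in $H_a$ are $a$ and the potential targets $A_y$. Any $A_y$-excluded path in $H_a$ from $a$ to some $b \in A_y$ has its internal vertices necessarily in $R$, and conversely any $(N_H[x] \cup N_H[y])$-excluded path in $H$ from $a$ to some $b \in A_y$ survives in $H_a$. It then suffices to call Algorithm~\ref{algo:protectBFS} on input $(H_a,a,A_y)$: by Lemma~\ref{lem:algoSpath} one such call decides in $O(n+m)$ time whether $a$ reaches every vertex of $A_y$ through an $A_y$-excluded path. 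Declaring $xy$ avoidable exactly when all $|A_x|$ calls return true then yields correctness via Lemma~\ref{lem:edgeprotected}.

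For the running time, the preprocessing of $B(x,y)$, $A_x$, $A_y$ and of a single copy of $H - \bigl((N_H[x] \cup N_H[y]) \setminus A_y\bigr)$ takes $O(n+m)$; each iteration over $a \in A_x$ only needs to splice in $a$ together with its at most $d_G(a)$ incident edges before the BFS call and undo the insertion afterwards, which keeps the per-iteration cost at $O(n+m)$. The total cost is therefore $O(n+m) + |A_x|\cdot O(n+m) = O\bigl(d_G(x)(n+m)\bigr) = O(nm)$; one may additionally swap the roles of $x$ and $y$ so that the iteration is performed on the smaller of $A_x$ and $A_y$.

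The point requiring some care is the construction of $H_a$: removing every vertex of $N_H[x] \cup N_H[y]$ other than $a$ and $A_y$ before running the BFS is essential, for otherwise Algorithm~\ref{algo:protectBFS} could discover a spurious path using another private neighbour of $x$ (or $y$) as an internal vertex and therefore not corresponding to an $(N_H[x] \cup N_H[y])$-excluded path of $H$. Once this is enforced, the correctness of the procedure and its $O(nm)$ running time follow directly from Lemmas~\ref{lem:edgeprotected} and~\ref{lem:algoSpath}.
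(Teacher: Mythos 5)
Your proposal is correct and follows essentially the same route as the paper: reduce to protectedness in $H=G-B(x,y)$ via Lemma~\ref{lem:edgeprotected}, note that pairs involving $x$ or $y$ are trivially joined, and for each $a\in A_x$ run Algorithm~\ref{algo:protectBFS} from $a$ with target $A_y$ on the graph $H-\bigl((A_x\setminus\{a\})\cup\{x,y\}\bigr)$, which is exactly your $H_a$. The only cosmetic differences are that the paper first dispatches the simplicial case via Observation~\ref{obs:edgesimplicial} and fixes $|A_x|\le|A_y|$, both of which your implicit handling (empty loop or empty target set) also covers.
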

\begin{proof}
Let $xy$ be an edge of $G$. We first collect the vertices of $B(x,y)$ in $O(n)$ time.
By Lemma~\ref{lem:edgeprotected} we need to check whether $xy$ is protected in $H=G - B(x,y)$.
If $xy$ is simplicial edge then $xy$ is avoidable and, by Observation~\ref{obs:edgesimplicial}, this can be tested in $O(n+m)$ time.
Otherwise, both sets $A_x, A_y$ are non-empty.
Without loss of generality, assume that $|A_x| \leq |A_y|$.
In order to check if $xy$ is protected, we run $|A_x|$ times Algorithm~\ref{algo:protectBFS}:
\begin{itemize}
\item for every vertex $a \in A_x$, run Algorithm~\ref{algo:protectBFS} on the graph $(H-((A_x\setminus\{a\})\cup\{x,y\})$ started at vertex $a$ with a target set $A_y$.
\end{itemize}
In particular, we test whether there is an $A_y$-excluded path between $a$ and every vertex of $A_y$ without considering the vertices of $(A_x\setminus\{a\})\cup\{x,y\}$, that is on the graph $H-((A_x\setminus\{a\})\cup\{x,y\})$.
If all vertices of $A_x$ have an $A_y$-excluded path with all the vertices of $A_y$ on each corresponding graph, then such paths do not contain any internal vertex from $A_x \cup A_y \cup \{y\}$. Since $N_{H}[x]=A_x \cup \{x,y\}$ and $N_{H}[y]=A_y \cup \{x,y\}$, we deduce that $xy$ is a protected edge, and thus, $xy$ is avoidable in $G$.
Regarding the running time, observe that we make at most $n \geq |A_x|$ calls to Algorithm~\ref{algo:protectBFS} on induced subgraphs of $G$.
Therefore, by Lemma~\ref{lem:algoSpath}, the total running time is $O(nm)$.
\end{proof}

Let us now show how to extend the recognition of an avoidable edge towards their common generalization of avoidable induced paths.
The \emph{internal path} of a non-edgeless induced path $P$ is the path obtained from $P$ without its endpoints and its vertex set is denoted by $in(P)$.

\begin{definition}[simplicial and avoidable path]
An induced path $P_k$ on $k \geq 2$ vertices is called \emph{simplicial} if there is no induced path on $k+2$ vertices that contains $P_k$ as an internal path.
An induced path $P_k$ on $k \geq 2$ vertices is called \emph{avoidable} if either $P_k$ is simplicial, or every induced path on $k+2$ vertices that contains $P_k$ as an internal path is contained in an induced cycle.
\end{definition}

For $k=2$, avoidable paths correspond to avoidable edges. Let $P_k$ be an induced path on $k$ vertices of a graph $G$ with $k \geq 3$ having endpoints $x$ and $y$.
We denote by $I[P_k]$ the vertices of $N_{G}[in(P_k)]\setminus\{x,y\}$.
That is, $I[P_k]$ contains the vertices of the internal path of $P_k$ and their neighbors outside $P_k$.
Given two non-adjacent vertices $x$ and $y$ in $G$, we denote by $G+xy$ the graph obtained from $G$ by adding the edge $xy$.

\begin{lemma}\label{lem:avoidablepath}
Let $P_k$ be an induced path on $k$ vertices of a graph $G$ with $k \geq 3$ having endpoints $x$ and $y$.
Then $P_k$ is an avoidable path in $G$ if and only if $xy$ is an avoidable edge in $G + xy - I[P_k]$.
\end{lemma}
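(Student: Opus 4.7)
My plan is to establish a bijective correspondence between the $P_{k+2}$ extensions of $P_k$ in $G$ and the induced $P_4$'s in $G' := G + xy - I[P_k]$ with middle edge $xy$, and then show that an induced cycle realizing one realizes the other. Write $P_k = \langle v_1, v_2, \ldots, v_k\rangle$ with $x = v_1$ and $y = v_k$. A $P_{k+2}$ extending $P_k$ has the form $\langle a, v_1, \ldots, v_k, b\rangle$ where $a \in N_G(x) \setminus \{y\}$ is nonadjacent in $G$ to every vertex of $\{v_2, \ldots, v_k\}$, $b \in N_G(y) \setminus \{x\}$ is nonadjacent to every vertex of $\{v_1, \ldots, v_{k-1}\}$, and $ab \notin E(G)$. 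The first step is to observe that these conditions are equivalent to saying $a \in N_G(x) \setminus (I[P_k] \cup \{y\})$, $b \in N_G(y) \setminus (I[P_k] \cup \{x\})$, $ay, bx \notin E(G)$, and $ab \notin E(G)$; that is, exactly the condition that $\langle a, x, y, b\rangle$ is an induced $P_4$ in $G'$ with middle edge $xy$.

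For the forward direction, assuming $P_k$ is avoidable in $G$, I will take any induced $P_4 = \langle a, x, y, b\rangle$ in $G'$, lift it to the $P_{k+2}$ $\langle a, v_1, \ldots, v_k, b\rangle$ in $G$, and use avoidability of $P_k$ to obtain an induced cycle $C$ through it. Writing $C = \langle a, v_1, \ldots, v_k, b, Q, a\rangle$ where $Q$ is the detour from $b$ back to $a$, the induced property of $C$ forces every internal vertex of $Q$ to be nonadjacent in $G$ to each $v_i$; in particular no internal vertex of $Q$ lies in $in(P_k)$ nor in $N_G(in(P_k))$, so $V(Q) \setminus \{a, b\} \subseteq V(G) \setminus I[P_k]$. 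Therefore $Q$ survives in $G'$, and replacing the induced path $\langle v_1, \ldots, v_k\rangle$ by the edge $xy$ gives the cycle $\langle a, x, y, b, Q, a\rangle$ in $G'$. Checking it is induced reduces to verifying the non-consecutive non-adjacencies, all of which either follow from the corresponding non-adjacencies in $C$ or from the fact that $ay, bx, ab \notin E(G)$ already.

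For the converse, assuming $xy$ is avoidable in $G'$, I take a $P_{k+2}$ extension $\langle a, v_1, \ldots, v_k, b\rangle$ of $P_k$ in $G$, obtain the corresponding induced $P_4$ $\langle a, x, y, b\rangle$ in $G'$, and use avoidability of $xy$ to get an induced cycle $C' = \langle a, x, y, b, Q', a\rangle$ in $G'$. Since $C'$ is induced and $x, y$ appear on it, $Q'$ cannot pass through $x$ or $y$; hence the edges of $Q'$ are all genuine edges of $G$ (the only added edge $xy$ is not used). Replacing the edge $xy$ on $C'$ by the path $\langle v_1, \ldots, v_k\rangle$ yields a closed walk in $G$. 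The key point is that every internal vertex of $Q'$ lies in $V(G) \setminus I[P_k]$, which immediately forbids adjacency in $G$ between such vertices and any $v_j$ with $2 \leq j \leq k-1$; combined with the non-adjacencies preserved from $C'$ and the inducedness of $P_k$ and of the $P_{k+2}$, the resulting cycle is induced in $G$ and contains the $P_{k+2}$.

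The main obstacle will be the careful bookkeeping in the two cycle-inducedness verifications, in particular ensuring that the detour's internal vertices avoid $I[P_k]$ on the forward side and that the substituted path does not introduce new chords on the reverse side; both hinge on the same structural observation that $I[P_k] = N_G[in(P_k)] \setminus \{x,y\}$ captures precisely the vertices whose adjacency to an internal $v_j$ would chord an induced cycle through $P_k$.
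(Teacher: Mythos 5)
Your proposal is correct and follows essentially the same route as the paper's proof: it establishes the same correspondence between induced $P_{k+2}$ extensions of $P_k$ in $G$ and induced $P_4$'s with middle edge $xy$ in $G+xy-I[P_k]$, and transfers induced cycles in both directions by swapping the internal path of $P_k$ with the edge $xy$, using exactly the paper's key observation that cycle vertices off the path avoid $I[P_k]$ (equivalently, $N_G(in(P_k))\setminus\{x,y\}\subset I[P_k]$).
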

\begin{proof}
We claim first that
there is a $P_{k+2}$ that contains $P_k$ as an internal path in $G$ if and only if
there is a $P_4$ that contains $xy$ as a middle edge in the graph $H = G + xy - I[P_k]$.

Assume that there is a $P_{k+2}$ that contains $P_k$ as an internal path in $G$.
Let $x'$ and $y'$ be the endpoints of $P_{k+2}$.
As $P_{k+2}$ is an induced path, both $x',y'$ belong to $H$ and $x'y, xy', x'y' \notin E(H)$.
Thus $\mypath{x',x,y,y'}$ is a $P_4$ in $H$ that contains $xy$ as a middle edge.

Assume that there is a $P_4=\mypath{x',x,y,y'}$ in $H$ that contains $xy$ as a middle edge.
Consider the vertices of the path $P_{k-2}$ of $P_k -\{x,y\}$ in $G$ that correspond to the edge $xy$ of $H$.
Then no vertex of the $P_{k-2}$ is adjacent to any of $x'$ or $y'$ by the construction of $H$.
Thus, replacing the edge $xy$ in the $P_4=\mypath{x',x,y,y'}$ by the path $P_{k-2}$, results in an induced path $P_{k+2}$ on $k+2$ vertices in $G$.

Observe that the above claim implies that $P_k$ is a simplicial path in $G$ if and only if $xy$ is a simplicial edge in $H$.
Next we show that a non-simplicial path $P_k$ with endpoints $x$ and $y$ is avoidable in $G$ if and only if the non-simplicial edge $xy$ is avoidable in $H$.
Assume that there is a $P_{k+2} = \mypath{x',x,P_{k-2},y,y'}$ that contains $P_k=\mypath{x,P_{k-2},y}$ as an internal path in $G$.
Let $C_{G}$ be an induced cycle that contains the $P_{k+2}$ in $G$.
Since $C_{G}$ is induced cycle, every vertex of $C_{G} - P_{k-2}$ belongs to $H$.
Now observe that the vertices of $C_{G} - P_{k-2}$ induce a path in $G$ of length at least four.
Hence the vertices of $C_{G} - P_{k-2}$ induce a cycle in $H$, since $xy \in E(H)$, which shows that $xy$ is avoidable edge in $H$.

To show that $P_k$ is avoidable in $G$, we show that there is an induced cycle that contains the described $P_{k+2}$.
Let $C_{H}$ be an induced cycle of $H$ containing a $P_4=\mypath{x',x,y,y'}$.
Since $xy$ is a avoidable edge in $H$, such a cycle exists.
Construct the cycle $C'$ obtained from $C_{H}$ by removing the edge $xy$ and attaching the path $P_{k-2}$ of $P_k -\{x,y\}$.
Then $C'$ is an induced cycle in $G$ because:
\begin{itemize}
\item $C_{H} - \{x,y\}$ is an induced path in $G$, as $H- \{x,y\}$ is an induced subgraph of $G$,
\item $P_k$ is an induced path in $G$ by definition, and
\item no vertex of $P_{k-2}$ has a neighbor in $C_{H} - \{x,y\}$, as $N_{G}(P_{k-2})\setminus\{x,y\} \subset I[P_k]$.
\end{itemize}
Therefore there is an induced cycle in $G$ that contains the described $P_{k+2}$ of $P_k$.
\end{proof}

\begin{theorem}\label{theo:avoidablepath}
Given an induced path $P_k$ on $k>2$ vertices of $G$, testing whether $P_k$ is avoidable can be done in $O(n \cdot m)$ time.
\end{theorem}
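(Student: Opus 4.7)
The plan is to apply Lemma~\ref{lem:avoidablepath} directly, reducing the recognition of an avoidable $P_k$ to the recognition of an avoidable edge in a slightly modified graph, and then invoke Theorem~\ref{theo:avoidableedge}.

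First, I would construct the auxiliary graph $H = G + xy - I[P_k]$, where $x$ and $y$ are the two endpoints of the input path $P_k$. To compute $I[P_k]$ it is enough to mark the $k-2$ internal vertices of $P_k$ and scan their adjacency lists, collecting every vertex that appears as a neighbor of an internal vertex (excluding $x$ and $y$ themselves). This takes time proportional to the sum of degrees of the internal vertices, i.e., $O(n+m)$. Deleting the marked vertices and inserting the edge $xy$ (which may or may not already be absent between $x$ and $y$ in $G$, but is absent after deletion because internal vertices of $P_k$ separate them in $P_k$) costs only $O(n+m)$ as well. Note that $H$ has at most $n$ vertices and at most $m+1$ edges.

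Next, I would run the $O(nm)$-time avoidable-edge recognition procedure of Theorem~\ref{theo:avoidableedge} on the pair $(H, xy)$. By Lemma~\ref{lem:avoidablepath}, $P_k$ is avoidable in $G$ if and only if $xy$ is avoidable in $H$, so the answer returned by that procedure is also the answer for the original problem. The total running time is therefore $O(n+m) + O(nm) = O(nm)$, as claimed.

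There is really no obstacle here beyond carefully verifying the sizes: one has to check that $H$ still satisfies $|V(H)| \le n$ and $|E(H)| \le m+1$, so that the cost of Theorem~\ref{theo:avoidableedge} applied to $H$ is bounded by $O(nm)$ in terms of the parameters of the original graph $G$ (not of some larger graph). Once this bookkeeping is done, the statement follows immediately from the two cited results. The proof itself should be only a few lines.
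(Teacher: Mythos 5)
Your proposal is correct and follows the same route as the paper: reduce to checking that $xy$ is avoidable in $G+xy-I[P_k]$ via Lemma~\ref{lem:avoidablepath}, build that graph in linear (the paper says $O(nk)$) time, and invoke the $O(nm)$ avoidable-edge test of Theorem~\ref{theo:avoidableedge}. Your extra bookkeeping on the size of the auxiliary graph is fine, and note that $xy \notin E(G)$ automatically since $P_k$ is induced with $k \geq 3$.
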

\begin{proof}
Assume that the endpoints of $P_k$ are $x$ and $y$.
By Lemma~\ref{lem:avoidablepath}, it is enough to check if the edge $xy$ is avoidable in the graph $G + xy - I[P_k]$.
Constructing the graph $G + xy - I[P_k]$ takes $O(nk)$ time.
Applying the algorithm given in Theorem~\ref{theo:avoidableedge} results in an algorithm with the claimed running time, since $k \leq n$.
\end{proof}


\section{Concluding Remarks}
The running times of our algorithms for listing all avoidable vertices are comparable to the corresponding ones for listing all simplicial vertices.
Thus we believe it is difficult to achieve a reduction of the running time for avoidable vertices without affecting the time needed for simplicial vertices.
As pointed out, we can detect avoidable vertices in particular graph classes in more efficient way.
Towards this direction, it is interesting to consider planar graphs and reveal any possible improvement on the running time.
Moreover the notion of protecting and the relative $S$-excluded paths seem to tackle further problems concerning avoidable structures.
Our recognition algorithm for avoidable edges results in an algorithm for listing avoidable edges with running time $O(nm^2)$ which is comparable to the $O(m^2)$-algorithm for listing avoidable vertices.
Regarding avoidable paths on $k$ vertices, one needs to detect first with a naive algorithm a path $P_k$ in $O(n^k)$ time and then test whether $P_k$ being avoidable or not.
As observed in \cite{BonamyDHT20}, such a detection is nearly optimal, since we can hardly avoid the dependence of the exponent in $O(n^k)$.
Therefore by Theorem~\ref{theo:avoidableedge} we get an $O(n^{k+1} \cdot m)$-algorithm for listing all avoidable paths on $k$ vertices.

An interesting direction for further research along the avoidable paths is to reveal problems that can be solved efficiently by taking advantage the list of all avoidable paths in a graph.
For instance, one could compute a minimum length of a sequence of shifts transforming an induced path $P_k$ to an avoidable induced path. 
Gurvich et al. \cite{Gurvich22} proved that each induced path can be transformed to an avoidable one by a sequence of shifts,
where two induced paths on $k$ vertices are shifts of each other if their union is an induced path on $k + 1$ vertices.
To compute efficiently a minimum length of shifts, one could construct a graph $H$ that encodes all neighboring induced paths on $k$ vertices of $G$. 
In particular, the nodes of $H$ correspond to all induced paths on $k$ vertices in $G$ and two nodes in $H$ are adjacent if and only if their union is an induced path on $k + 1$ vertices in $G$. 
Note that $H$ contains $O(n^{k})$ nodes and can be constructed in $n^{O(k)}$ time. 
Having the list of avoidable paths on $k$ vertices, we can mark the nodes of $H$ that correspond to such avoidable paths. 
Now given an induced path $P_k$ on $k$ vertices in $G$ we may ask the shortest path in $H$ from the node that corresponds to $P_k$ towards a marked node that corresponds to an avoidable path.
Such a path always exists from the results of \cite{Gurvich22} and can be computed in time linear in the size of $H$. 
Therefore, for fixed $k$, our algorithm computes a minimum length of sequence of shifts in polynomial time answering an open question given in \cite{Gurvich22}.  

\bibliography{avoidable}

\end{document}